\newcommand\overlinex{}
\DeclareMathOperator*{\argmax}{arg\,max}
\newtheorem{theorem}{Theorem}
\newtheorem{lemma}{Lemma}
\newtheorem{corollary}{Corollary}
\title{Multiplayer General Lotto Blotto game}
\author{
Yan Liu$^1$
\and
Bonan Ni$^2$
\and
Weiran Shen$^{1}$
\and
Zihe Wang$^1$
\and
Jie Zhang$^3$
\affiliations
$^1$Renmin University of China \\
$^2$Tsinghua University\\
$^3$University of Bath\\
\emails
liuyan5816@ruc.edu.cn,
bricksern@gmail.com,
shenweiran@ruc.edu.cn,
wang.zihe@ruc.edu.cn,
jz2558@bath.ac.uk
}
\begin{document}

\maketitle

\begin{abstract}
In this paper, we investigate the multiplayer General Lotto game -- a significant extension of the classic Colonel Blotto game -- in a setting involving competition across multiple battlefields.
Under this framework, resources are allocated probabilistically by players, ensuring their expected expenditures remain within individual budgets.
Our contributions begin by establishing the existence of Nash equilibrium for general scenarios, accommodating asymmetries in both player budgets and battlefield valuations.
A detailed characterization of equilibrium strategies follows, specifically addressing the complexities arising when multiple players compete on a single battlefield, and culminating in a system of equations for computing equilibria.
Furthermore, we identify conditions under which equilibrium uniqueness is guaranteed in single-battlefield game.
Turning to multi-battlefield competition, our analysis reveals an upper bound on the average number of battlefields actively contested by each player.
For symmetric scenarios, we provide explicit equilibrium solutions.
Finally, equilibrium multiplicity is demonstrated concretely through an illustrative example involving multiple players and battlefields.
\end{abstract}

\section{Introduction}
The Colonel Blotto game (CB game) is one of the simplest and most well-known game-theoretic models for resource allocation.
Initially proposed by \citep{r1}, the CB game has been widely studied over the years \citep{r4}.
In this game, two competitors, $A$ and $B$, are each given a fixed budget of resources to compete over $n$ battlefields. 
Each player assigns a specific value to each battlefield and simultaneously allocates their resources across these $n$ battlefields, ensuring that the total allocation does not exceed their respective budgets. 
On each battlefield, the player with the highest allocation wins and receives the value assigned to that battlefield, while the losing player receives no reward.
Consequently, each player's utility is the sum of their gains from all $n$ battlefields.
The objective for each player is to maximize their own utility by strategically choosing their resource distribution.

In addition to the standard CB game, several variants have been proposed \citep{r25,r20}.
One of the most well-known is the General Lotto (GL) game, where the budget constraint is relaxed so that aggregate allocations do not exceed a player's budget only in expectation, rather than with certainty \citep{r12,r13,r14,r16,r17,beale-1962,sahuguet-2006}.
The GL game models scenarios in which multiple items are allocated through repeated, independent all-pay auctions, and several budget-constrained bidders compete for these items on a daily basis.
For example, consider the case where several machine learning competitions start online every day, each with a new task.
Multiple entities, such as academic research groups or companies, compete for the first place in each competition.
These entities may value each competition's first place differently.
Each entity has a fixed amount of computing power, determining the total computation they can perform each day.
For every competition, the entity that spends the most computation on its task wins.
To increase the probability of winning a competition, an entity must allocate more computation to it, at the cost of either (i) having less computation to spend on other competitions, or (ii) performing poorly in competitions that start in the following days (assuming each competition has a relatively long duration, allowing entities to accumulate several days' worth of computation for a single competition).
In fact, as demonstrated in the proof of Theorem \ref{thm-mbeq-exist}, there exists a threshold amount of computation beyond which no player finds it beneficial to accumulate more for any single competition.

To our knowledge, in non-cooperative game environment, current research on the GL game with asymmetric budgets has been limited to the two-player case \citep{r12,r13,r14,r16,Vu-2021}.
However, many real-world scenarios involve multiple participants, such as market competition, international relations, social networks, and ecosystems.
These situations all involve strategic resource allocation and competition, and a GL game framework with multiple players can more accurately model interactions.
Additionally, in applications such as auction design and public policy making, designers aim to guide participants toward a specific equilibrium by adjusting rules or mechanisms.
Research on Nash equilibrium in multiplayer settings can aid in identifying and designing effective mechanisms for these purposes.

Investigating the existence, uniqueness, and structure of Nash equilibrium in multiplayer settings is an important but highly challenging task.
In the context of two players, \cite{r14} completely characterized the Nash equilibria in a general setting in which players have asymmetric budgets and the battlefield values are also asymmetric between players.
In this paper, we extend their setting from two players to multiple players.
However, analyzing the Nash equilibrium in multiplayer settings introduces certain technical challenges.
These challenges mainly fall into the following three aspects:
\begin{enumerate}
    \item In the case of a single battlefield, \cite{r14} showed that when the GL game involves only two players, there exists a closed-form solution for the Nash equilibrium.
    When there are only two players, their supports are identical in Nash equilibrium.
    However, when the number of players in the GL game exceeds two, such a closed-form solution no longer exists.
    We discover that the upper endpoints of the supports of players' strategies coincide, and that the minimum value of a player's support above zero inversely correlates with his budget in Nash equilibrium.
    \item In the case of multiple battlefields, \cite{r14} proved that the GL game has a unique set of Nash equilibrium univariate marginals.
    They used a parameter related to players' budgets to partition the battlefields into regions where each player has an advantage.
    However, in the multi-player, multi-battlefield setting, the set of Nash equilibrium univariate marginals is not unique.
    The multiplicity of Nash equilibria makes analyzing Nash equilibria more challenging.
    \item Regarding the existence of Nash equilibrium, \cite{r14} employed a constructive proof approach.
    However, when multiple players participate in the GL game, this constructive method becomes inapplicable due to the complicated computation of Nash equilibrium.
    We use a ``discretization + limiting'' game framework, applying Kakutani's fixed point theorem and Helly's selection theorem to prove the existence of Nash equilibria.
\end{enumerate}

\subsection{Our Contribution}
In this work, we focus on the multiplayer GL game with asymmetric budgets over multiple heterogeneous battlefields.
Our key contributions are as follows:
\begin{itemize}
    \item We establish the existence of a Nash equilibrium in the GL game.
    We start with constructing a variant of the GL game where each player's bid space is discrete and bounded from above, and we demonstrate the existence of a Nash equilibrium in this modified game.
    Subsequently, we show that if the threshold is sufficiently large, it becomes non-restrictive.
    A Nash equilibrium in the GL game arises from the limit of a sequence of Nash equilibria in the modified games, where the bid grid in the sequence becomes finer and finer.
    A more detailed discussion of the approach is given at the end of Section 3.
    \item For the game with a single battlefield, we provide a comprehensive characterization of Nash equilibrium, revealing a relationship between the relative order of players' budgets and the support of their strategies.
    This characterization also naturally implies the known result for the Nash equilibrium in the case of two players with asymmetric budgets \citep{r14}.
    Additionally, we provide a system of equations to solve for the Nash equilibria.
    Moreover, we prove the uniqueness of Nash equilibrium when there are at least two players with the maximum budget.
    \item For the game with multiple battlefields, we prove that for almost every value profile, each player focuses only on few battlefields when the number of players is sufficiently large.
    We show that Nash equilibrium is not unique by providing an example.
    Additionally, in the symmetric setting with multiple players and multiple battlefields, we present a solution for the Nash equilibrium.
\end{itemize}
To our knowledge, this is the first paper to study the multiplayer GL game with asymmetric budgets.
Our results significantly extend the existing literature on the GL game.


\subsection{Related Work}
The CB game was initially developed to simulate military logistics, where resources are analogous to soldiers, equipment, or weapons.
Owing to its ability to model a variety of real-world scenarios, the CB game has aroused considerable interest among scholars in fields such as sociology, mathematics, economics, and computer science \citep{r5,r6,r7,r8,r9,r11,r10,Vu-2021,liu-Stackelberg}.
In the CB game, there may not be pure strategy Nash equilibrium; however a mixed strategy Nash equilibrium does exist, represented by a pair of $n$-variate distributions \citep{r11,Perchet-2022,Macdonell-2015,liu-2025}.
Research on the CB game primarily focuses on identifying and computing Nash equilibria \citep{r11,r21,r23,r24}.
However, the computation of Nash equilibrium in this game is nontrivial, because the equilibrium strategies correspond to complicated joint distributions defined on an $n$-dimensional simplex \citep{r11,r19,r18,r20,r21}.

The GL game, as the most well-known variant of the CB game, has also been extensively studied.
\cite{beale-1962} were the first to introduce the GL game, proposing it as an auxiliary construct for deriving approximate solutions to the Colonel Blotto game.
\cite{r16} considered the GL game under an information asymmetry setting, where one player's budget was public knowledge while the other player's budget was drawn from a Bernoulli distribution.
\cite{r17} studied the GL game with a concession, where the players' concession strategies enables players to reach a more effective Nash equilibrium in a competitive environment.
\cite{Chandan-2020} investigated how, in the GL game, openly declaring strategic intentions can influence an opponent's strategy choices, helping players to secure better outcomes in specific situations and gain an advantage.
In addition, \cite{Chandan-2022} explored the optimal strategies for resource allocation in a multi-stage GL game.
Their work indicated that by dynamically adjusting investment strategies, players can significantly increase their winning rates at different stages, thereby maximizing overall returns.

\section{Preliminaries} 
In the multiplayer General Lotto (GL) game, a set of $n$ players, indexed by $[n] := \{1, 2, \cdots, n\}$, compete across $m$ battlefields, indexed by $[m] := \{1, 2, \cdots, m\}$. 
Each player $i \in [n]$ is equipped with a fixed budget $B_i > 0$ and a valuation $v_{i,j} > 0$ for each battlefield $j \in [m]$. 
A player's \emph{pure strategy} is to distribute their budget, referred to as \emph{bids}, across the battlefields.
The outcome on each battlefield is determined independently; the player who places the highest bid on a battlefield wins that battlefield and gains the associated value.
All other players receive no reward for that battlefield.
In cases where multiple players tie for the highest bid, the winner is chosen uniformly at random.

Following the model introduced by \citep{r14}, we consider mixed strategies in which a player's bid on each battlefield is modeled as a nonnegative random variable.
That is, instead of choosing fixed bids, player $i$ selects a tuple $(F_{i,j})_{j \in [m]}$, where each $F_{i,j}$ is a cumulative distribution function (c.d.f.) from which their actual bid, denoted $X_{i,j}$, is drawn.
The total expected bid across all battlefields must respect the budget constraint. 
The set of feasible strategies $\mathcal{F}_i$ includes all such $m$-tuples $(F_{i,j})_{j \in [m]}$ that meet the following conditions: \\
$(i)$ the support of each $F_{i,j}$, denoted by $Supp_{i,j}$, is contained in $[0, +\infty)$, ensuring all bids are nonnegative, and \\
$(ii)$ the total expected bid across all battlefields does not exceed player $i$'s budget $B_i$. \\
We therefore represent the set of feasible strategies for player $i$ as
\begin{align}\label{eq-strategy}
    \mathcal{F}_i \mathrel{\mathop:}= \Big\{ (F_{i,j})_{j \in [m]}: \,\,\ \sum\nolimits_{j=1}^{m} \mathbb{E}_{X_{i,j} \sim F_{i,j}} \left[ X_{i,j} \right] \leq B_i, \,\,\ Supp_{i,j} \subseteq [0, +\infty) \Big\}.
\end{align}

We use $F_{i,j}(x^+)$ and $F_{i,j}(x^-)$ to denote the right-hand and left-hand limits of the distribution function $F_{i,j}$ at point $x$, respectively.
Throughout the paper, the notation $-i$ refers to all players except player $i$. 
On each battlefield $j$, the bids $(X_{i,j})_{i \in [n]}$ are independently distributed.
Specifically, suppose the realization of the random variable $X_{i,j}$ takes the value $x \ge 0$, while the bids of all other players on battlefield $j$, $(X_{{i'},j})_{i'\neq i}$, are drawn independently from their respective distributions $F_{-i,j}$.
Then, the probability that player $i$ wins battlefield $j$ is given by
\begin{align}\label{eq-utility}
    \text{Pr}[i \text{ wins } j \text{ by bidding } x] = \mathop{\mathbb{E}}\limits_{X_{-i,j} \sim F_{-i,j}}\left[\frac{\mathbb{I}[x \ge X_{i',j}, \forall i' \ne i]}{\# \{i' \ne i : X_{i',j} = x\} + 1}\right],
\end{align}
where $\mathbb{I}[\cdot]$ is the indicator function, equal to 1 if the predicate is true and 0 otherwise, and $\# \{ \cdot \}$ denotes the cardinality of a set, i.e., the number of elements in the set.

Player $i$'s expected utility on battlefield $j$ is given by their valuation $v_{ij}$ multiplied by the probability of winning that battlefield.
Their total expected utility is the sum of expected utilities across all battlefields.
These are formally defined as:
\begin{align*}
    & u_{i,j}(x, F_{-i, j}) = v_{ij} \cdot  \underset{X_{-i, j} \sim F_{-i,j}}{\mathbb{E}} \left[\frac{\mathbb{I}[x \ge \max_{i' \in [n]} X_{i', j}]}{\# \{i' \ne i: X_{i',j} \ge x\} + 1}\right]. \\
    & U_{i} \left( (F_{i,j})_{j \in [m]}, (F_{-i, j})_{j \in [m]} \right) = \sum\nolimits_{j \in [m]} \underset{X_{i,j} \sim F_{i,j}}{\mathbb{E}} \left[u_{i,j}(X_{i,j}, F_{-i,j}) \right].
\end{align*}

Players are utility maximizers. Given the strategy profile of the other players
$(F_{-i, j})_{j \in [m]}$, player $i$'s best response is a strategy that maximizes their expected utility:
\begin{align*}
    (F_{i,j})_{j \in [m]} \in \argmax_{(F'_{i,j})_{j \in [m]} \in \mathcal{F}_i} U_i \left( (F'_{i,j})_{j \in [m]}, (F_{-i, j})_{j \in [m]} \right).
\end{align*}
A strategy profile $(F_{i, j})_{i \in [n], j \in [m]}$ is a Nash equilibrium if, for every player $i$, their strategy is a best response to the strategies of all other players.

Throughout the paper, we focus on the mixed strategies of the players, i.e., $F_{i,j}$, $\forall i \in [n]$, $\forall j \in [m]$.

The following lemma provides a necessary condition for a strategy profile to be a Nash equilibrium.
Intuitively, on each battlefield, a player's utility must be bounded above by a linear function of their own bid.
Moreover, the player assigns positive probability only to those bids where this upper bound is tight.
That is, the utility equals the value of the linear function.
Otherwise, the player could deviate to a different bid that yields strictly higher utility without altering the expected allocation to that battlefield.
Additionally, for each player, the slopes of these linear upper bounds must be the same across all battlefields.
If they differed, the player could improve their utility by shifting expected allocations between battlefields.
\begin{lemma}\label{lem-cont-linear}
    If $(F_{i, j})_{i \in [n], j \in [m]}$ is a Nash equilibrium, then there exist constants $a_i > 0$ for every $i \in [n]$ and $b_{i,j} \ge 0$ for every $i \in [n], j \in [m]$ such that
    \begin{align*}
        \Pr_{X_{i,j} \sim F_{i, j}} \left[u_{i,j}(X_{i,j}, F_{-i, j}) = a_{i} X_{i,j} + b_{i,j} \right] = 1,
    \end{align*}
    and for all $x \geq 0$, it holds that
    \begin{align*}
        u_{i,j}(x, F_{-i, j}) \le a_{i} x + b_{i,j},
    \end{align*}
    for every $i \in [n]$ and $j \in [m]$.
\end{lemma}

We next establish a property of the constants $b_{i,j}$ that further simplifies the analysis.
\begin{lemma}\label{lem-one-pos-b}
   For every battlefield $j \in [m]$, at most one player can have $b_{i,j} > 0$. That is,   
   $\# \{i:b_{i,j} > 0\} \le 1$.
\end{lemma}
This result will be particularly useful in our later analysis, especially when characterizing Nash equilibria on a single battlefield.

\section{Existence of Nash Equilibrium in the General Lotto game}
In this section we show that the GL game has a Nash equilibrium.

\begin{theorem}\label{thm-mbeq-exist}
   A Nash equilibrium exists in the GL game.
\end{theorem}

The proof of Theorem \ref{thm-mbeq-exist} relies on the following game $\mathcal{G}_k$, which is a modification of the GL game where only bounded bids on discrete grids are allowed, and ties are broken uniformly at random.

For every $k = 1, 2, \ldots$, consider the following game $\mathcal{G}_k$:
\begin{itemize}
    \item The finite set of feasible bids is given by $A_k \mathrel{\mathop:}= \{\frac{l}{k} \cdot T \mid 0 \le l \le k, l \in \mathbb{Z} \}$, where $T = 2^{2n+3} \cdot \max_{i \in [n]} B_i$.
    \item Every player $i \in [n]$ chooses a strategy, which is given by $m$ distributions $F_{i,\cdot,k} \mathrel{\mathop:}= (F_{i, j, k})_{j \in [m]}$ with every $F_{i,j,k} \in \Delta(A_k)$, so that player $i$'s random bid on battlefield $j$ is given by $X_{i,j} \sim F_{i,j,k}$, satisfying constraint $\sum_{j \in [m]}\mathbb{E}[X_{i,j}] \le B_i$.
    Note that the bids $(X_{i,j})_{i \in [n], j \in [m]}$ are independently distributed.
    \item Denote the strategies of all players other than $i$ on battlefield $j$ by $F_{-i, j, k} \mathrel{\mathop:}= (F_{i',j,k})_{i' \ne i}$, and the random bids of all players other than $i$ on battlefield $j$ by $X_{-i, j} \mathrel{\mathop:}= (X_{i', j})_{i' \ne i}$.
    Denote the strategy profile by $F_{\cdot,\cdot,k} \mathrel{\mathop:}= (F_{i,\cdot,k})_{i \in [n]}$, and the strategies of players other than $i$ by $F_{-i,\cdot, k}$.
    \item Denote the set of $i$'s feasible strategies by $\mathcal{F}_{i,k}$, which is a compact subset of $(\Delta(A_k))^m$.
    Denote the set of feasible strategy profiles by $\mathcal{F}_k \mathrel{\mathop:}= \times_{i \in [n]} \mathcal{F}_{i,k}$.
    \item Break ties uniformly at random: For each $i \in [n]$ and each $F_{\cdot, \cdot, k} \in \mathcal{F}_k$, the utility of player $i$'s is given by the function $U_i: \mathcal{F}_k \rightarrow \mathbb{R}$, defined as:
    \begin{align*}
        u_{i,j}(x, F_{-i, j, k}) = \underset{X_{-i, j} \sim F_{-i,j,k}}{\mathbb{E}} \left[\frac{ v_{ij} \cdot \mathbb{I}[x \ge \max_{i' \in [n]} X_{i', j}]}{ \# \{i' \ne i: X_{i',j} \ge x \} + 1} \right].
    \end{align*}
    \begin{align*}
        U_{i}(F_{\cdot,\cdot,k}) = \sum_{j \in [m]} \mathbb{E}_{X_{i,j} \sim F_{i,j,k}} \left[u_{i,j}(X_{i,j}, F_{-i,j,k}) \right].
    \end{align*}
\end{itemize}
Throughout this section, given a positive integer $k$, we denote a Nash equilibrium of $\mathcal{G}_k$ by $\tilde F_{\cdot,\cdot,k}$ and represent a Nash equilibrium strategy of player $i$ by $(\tilde F_{i,\cdot,k})_{j \in [m]}$.

Based on our constructed discrete game $\mathcal{G}_k$, we adopt the following steps to prove the existence of Nash equilibrium in the GL game:
\begin{itemize}
    \item Show that $\mathcal{G}_k$ admits a Nash equilibrium.
    \item Show that no player bids in $[\frac{T}{2} + \frac{T}{k}, T]$ when $k$ is sufficiently large.
    \item Show that $\tilde F_{i,j,k}$ converges when $k \to \infty$.
    \item Show that $\tilde F_{i,j,k}$ is uniformly continuous in the interval $[0,T]$ when $k \to \infty$.
    \item Show that the existence of a Nash equilibrium in $\mathcal{G}_k$ can be extended to the original game GL.
\end{itemize}

One can apply Kakutani's fixed point theorem to a proper set-valued function $\beta: \mathcal{F}_k \rightarrow 2^{\mathcal{F}_k}$ and conclude that a Nash equilibrium exists in $\mathcal{G}_k$.

\begin{lemma}\label{lem-gk-nash}
    For every $k \ge 1$, $\mathcal{G}_k$ has a Nash equilibrium.
\end{lemma}

It should be noted that Nash's theorem cannot be used to prove Lemma \ref{lem-gk-nash}.
Under Nash's theorem, each player's feasible mixed strategy set is a simplex, with each vertex corresponding to a pure strategy, and every pure strategy for a player must not exceed their budget.
However, in our model, each player's strategy must satisfy the constraint that the expected total allocation does not exceed their budget.
This means that, on battlefield $j$, player $i$'s mixed strategy may assign positive probability to bids that individually exceed their budget.

For any $\tilde F_{\cdot, \cdot, k}$, Lemma \ref{lem-bounded-eq} shows that the supports of the Nash equilibrium strategies are uniformly bounded away from the upper threshold $T$.
This means that, as long as $T$ is chosen sufficiently large relative to the players' budgets, the actual Nash equilibrium strategies are unaffected by the specific value of $T$.
The intuition is as follows: if a player bids a sufficiently high value, such as $3 \cdot \max_{i\in [n]} B_i$, then by Markov's inequality and the expected budget constraints, the probability that any other player's bid exceeds this amount must be less than $\frac{1}{2}$.
Thus, a player who bids $3 \cdot \max_{i \in [n]} B_i$ can beat any single opponent with probability greater than $\frac{1}{2}$, and wins against all opponents simultaneously with probability greater than $(\frac{1}{2})^{n-1}$.
However, such a high bid is very costly in expectation and not optimal under the budget constraint.
Therefore, in Nash equilibrium, players will not put probability mass near the upper threshold, and the strategies concentrate on much lower bids.

\begin{lemma} \label{lem-bounded-eq}
    Given $T = 2^{2n+3} \cdot \max_{i \in [n]} B_i$, there exists $K > 0$ s.t. for any $k \ge K$, no player bids in $[\frac{T}{2} + \frac{T}{k}, T]$ on any battlefield $j$ in Nash equilibrium of $\mathcal{G}_k$.
\end{lemma}

Next, we show that the sequence $\{F_{i,j,k}\}_{k\ge 1}$ admits a subsequence that converges to a limit distribution when $k \to \infty$.
Because every distribution function $F_{i,j,k}$ is monotone and bounded in $[0,1]$, the family $\{F_{i,j,k}\}_{k \ge 1}$ forms a set of bounded monotone functions.
Helly's selection theorem guarantees that for any fixed coordinate $(i,j)$, one can extract a pointwise-convergent subsequence.
Since the numbers of players $n$ and battlefields $m$ are finite, we iterate over the coordinates $(1,1),(1,2),\dots,(n,m)$: first take a convergent subsequence for $(1,1)$; within that subsequence, extract another for $(1,2)$; and so on.
After finitely many steps we obtain an infinite subsequence $\{k_\ell\}$ with $k_\ell \to \infty$ along which $F_{i,j,k_\ell}(x)$ converges for every $x$ and for every $(i,j)$.
Thus, the limits $F_{i,j}(x) = \lim_{\ell \to \infty} F_{i,j,k_\ell}(x)$ define the desired limit distributions.
Hence, using Helly's ``coordinate-by-coordinate, nested subsequence'' procedure and the fact that the total number of coordinates is finite, we ensure that the full strategy sequence converges simultaneously along a common subsequence.

In the remaining part of this section, we refine $(\tilde F_{\cdot,\cdot,k})_{k \ge 1}$ to be the final subsequence obtained after the above steps, in which every $(\tilde F_{i,j,k})_{k \ge 1}$ converges pointwise.
Denote the final limit of the sequence as $\tilde F_{\cdot, \cdot} \mathrel{\mathop:}= (\tilde F_{i,j})_{i \in [n], j \in [m]}$.
Note that for $\tilde F_{i,\cdot,k}$ to be a best response to $\tilde F_{-i,\cdot,k}$, it must satisfy the condition $\sum_{j \in [m]} \mathbb{E}_{X_{i,j} \sim \tilde F_{i,j,k}} [X_{i,j}] = B_i$, which in turn implies that $\sum_{j \in [m]} \mathbb{E}_{X_{i,j} \sim \tilde F_{i,j}} [X_{i,j}]$ $= B_i$.

Note that although every c.d.f. $\tilde F_{i,j,k}$ is right-continuous, the limit $\tilde F_{i,j}$ does not necessarily have continuity.
By taking some sufficiently large $k$ and analyzing Nash equilibrium of $\mathcal{G}_k$, we can establish some properties about the continuity of $\tilde F_{\cdot, \cdot}$ in $[0, T)$ (see Lemma \ref{lem-zero-cont} and Lemma \ref{lem-continuous} in Appendix \ref{appendix-sec-3}).
Based on the continuity, we can derive Lemma \ref{lem-uniform-converge}, which establishes that the convergence of $(\tilde F_{\cdot, \cdot, k})_{k \ge 1}$ is uniform and furthermore ensures that the sequences $u_{i, j}$ also converge.

\begin{lemma}\label{lem-uniform-converge}
    Given $n \ge 3$, for every $i \in [n], j \in [m]$, we have:
    \begin{enumerate}
        \item $\tilde F_{i,j}$ is uniformly continuous on $[0,T]$,
        \item the sequence $(\tilde F_{i,j,k})_{k \ge 1}$ uniformly converges to $\tilde F_{i,j}$,
        \item the sequence of univariate functions $u_{i, j}(\cdot, \tilde F_{-i,j,k})$ will converge in $||\cdot||_{\infty}$ to $u_{i,j}(\cdot, \tilde F_{-i, j})$, where $u_{i,j}(\cdot, \tilde F_{-i, j})$ is the utility function of original GL game with continuous bid space.
        That is,
        \begin{align*}
            \lim_{k \to \infty} \sup_{x \in [0,T]}|u_{i, j}(x, \tilde F_{-i,j,k}) - u_{i,j}(x, \tilde F_{-i, j})| = 0.
        \end{align*}
    \end{enumerate}
\end{lemma}

Lemma \ref{lem-uniform-converge} leads to the conclusion that $\tilde F_{\cdot, \cdot}$ is a Nash equilibrium in the game where the players can choose random bids within interval $[0, T]$.
By Lemma \ref{lem-bounded-eq}, we can know that no player bids in $[\frac{3T}{4}, T]$.
In fact, even if players are allowed to bid more than $T$, in Nash equilibrium, no player will actually bid above $T$.
Thus, we can establish the existence of a Nash equilibrium in the GL game.
Due to space limitations, the proof of Theorem \ref{thm-mbeq-exist} is deferred to Appendix \ref{appendix-sec-3}.

\paragraph{Remark on the proof approach.}
In the General Lotto game, the existence of a Nash equilibrium cannot be established directly by a fixed-point theorem.
In particular, even a more powerful variant of the Kakutani theorem, namely the Fan--Glicksberg theorem, is not applicable.
The reasons are as follows:
To prove the existence result using a fixed-point theorem, we need to consider players' best responses.
Let the players' best response be represented by a correspondence $\Gamma: F \rightrightarrows F$.
The correspondence $\Gamma_i: F_{-i} \rightrightarrows F_i$ is required to satisfy the following properties:
first, $\Gamma_i$ must be compact, convex, and nonempty;
second, $\Gamma_i$ must be upper hemicontinuous (u.h.c.).
However, in our game, neither of these properties holds naturally.
For the first property, because of the tie-breaking rule, a best response may fail to exist.
For instance, assuming there are only two players competing, if player 1 bids 1 with certainty, then there is no best response for player 2.
In addition, since the action space is unbounded, the value of $\Gamma_i$ may also be unbounded.
For the second property, because of the tie-breaking rule, $\Gamma_i(\lim\limits_{F_{-i} \rightarrow F_{-i}^*} F_{-i})$ may fail to exist, where $F_{-i}^*$ denotes the limit of the sequence $F_{-i}$.
Even if it exists, the utility function is not continuous in the presence of ties; hence the best response mapping fails to be upper hemicontinuous.
These obstacles make a direct application of the fixed-point theorem impossible, which explains the technical necessity of our discretization-based approach.


\section{Nash Equilibrium Characterization of A Single Battlefield}
In this section, we analyze Nash equilibrium in the case of a single battlefield, i.e., the special case where $m = 1$.
We begin by providing a complete characterization of the structure of Nash equilibrium.
Our analysis reveals that the upper endpoints of the supports of the players' Nash equilibrium strategies coincide, and the minimum value of a player's support above zero is inversely correlated with their budget.
Furthermore, we prove the uniqueness of the Nash equilibrium when at least two players share the maximum budget.

\subsection{Characterization of Nash Equilibrium}
With only one battlefield, we omit the subscript $j$ in this section, and any Nash equilibrium can be represented by a distribution profile $(F_{i})_{i \in [n]}$ on the single battlefield.
Additionally, without loss of generality, we assume $v_{i} = 1$ for every player $i$.

We first show that for any Nash equilibrium, $\forall i \in [n]$, $F_i$ has no mass point in $(0, +\infty)$ (see Lemma \ref{distribution} and Lemma \ref{closed set of supprot}).
Then we establish that, in any Nash equilibrium, $0$ is the common infimum of the supports of all players' strategies, and the common supremum of the supports of all players' strategies is the same (see Lemma \ref{epsilon is supprot} and Lemma \ref{continuous interval}).

The following lemma states that, for any $x > 0$, in any Nash equilibrium, there can be at most one player $i$ whose strategy $F_i(x)$ assigns a non-zero measure at $x$.

\begin{lemma}\label{distribution}
    For any Nash equilibrium $(F_i)_{i \in [n]}$, we have $\# \{i: F_i(x) \neq F_i(x^-)\} \leq 1$, $\forall x > 0$.
\end{lemma}

The next lemma examines the characteristics of the support of players' strategies in a Nash equilibrium.

\begin{lemma}\label{closed set of supprot}
    For any Nash equilibrium $(F_i)_{i \in [n]}$, we have
    \begin{enumerate}
        \item $\bigcup_{i} \overlinex{Supp_{i}}$ is an interval starting from 0.
        \item $\bigcup_{i' \neq i} \overlinex{Supp_{i'}} = \bigcup_{\hat{i}} \overlinex{Supp_{\hat{i}}}$, $\forall i$.
        \item $\forall x \in \bigcup_i \overlinex{Supp_{i}}$, $\# \{i|x \in \overlinex{Supp_i}\} \geq 2$.
    \end{enumerate}
\end{lemma}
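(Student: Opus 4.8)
The plan is to set $M:=\bigcup_i\overline{Supp_i}$ and note at the outset that $M$ is closed (a finite union of closed sets) and bounded: Lemma~\ref{linear} forces $u_i(x,F_{-i})=a_ix+b_i\le 1$ on $Supp_i$ with $a_i>0$, so $Supp_i\subseteq[0,(1-b_i)/a_i]$. Thus proving item~(1) amounts to showing $0\in M$ and that $M$ has no gaps, after which $M=[0,\sup M]$. I would establish the claims in the order: item~(3) for $x>0$, then item~(1), then item~(3) at $x=0$, then item~(2).

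For item~(3) with $x>0$: suppose $x\in\overline{Supp_i}$ but $x\notin\overline{Supp_{i'}}$ for all $i'\ne i$. As the $\overline{Supp_{i'}}$ are closed and finitely many, there is $\epsilon\in(0,x)$ with $(x-\epsilon,x+\epsilon)\cap\overline{Supp_{i'}}=\emptyset$ for every $i'\ne i$; hence each $F_{i'}$ ($i'\ne i$) is constant and atomless on $(x-\epsilon,x+\epsilon)$, so $u_i(\cdot,F_{-i})=\prod_{i'\ne i}F_{i'}(\cdot)$ is constant there. Picking a support point $\hat x$ of player $i$ in $(x-\epsilon,x+\epsilon)$ and any $y\in(x-\epsilon,\hat x)$ yields $u_i(y,F_{-i})=u_i(\hat x,F_{-i})=a_i\hat x+b_i$ by Lemma~\ref{linear}, contradicting $u_i(y,F_{-i})\le a_iy+b_i<a_i\hat x+b_i$.

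For item~(1), I would rule out gaps first. Suppose $(c,d)\cap M=\emptyset$ with $0\le c<d\le\sup M$ and $d\in M$, so $d>0$ and $d\in Supp_{i_0}$ for some $i_0$. No player puts mass on $(c,d)$, so $u_{i_0}(\cdot,F_{-i_0})$ is constant on $(c,d)$; if moreover no player other than $i_0$ has an atom at $d$ then $u_{i_0}(d,F_{-i_0})$ equals that same constant, which by Lemma~\ref{linear} is $a_{i_0}d+b_{i_0}$, and then $u_{i_0}(y,F_{-i_0})=a_{i_0}d+b_{i_0}>a_{i_0}y+b_{i_0}$ for $y\in(c,d)$ contradicts Lemma~\ref{linear}. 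Hence some $p\ne i_0$ has an atom at $d$; but $d\in Supp_p$ and $(c,d)$ is still a gap below $p$'s support point $d$, so rerunning the argument with $p$ forces a second player to have an atom at $d$, contradicting Lemma~\ref{distribution}. For $0\in M$: if $m:=\inf M>0$ then every bid is $\ge m$ almost surely and $F_i(m^-)=0$ for all $i$; taking $k$ with $m\in Supp_k$, winning by bidding $m$ requires an $n$-way tie, so $u_k(m,F_{-k})=\tfrac1n\prod_{i'\ne k}F_{i'}(m)$, whereas bidding $m+\eta$ wins with probability $\ge\prod_{i'\ne k}F_{i'}(m)$; letting $\eta\to0^+$ in Lemma~\ref{linear}'s bound gives $\prod_{i'\ne k}F_{i'}(m)\le a_km+b_k=\tfrac1n\prod_{i'\ne k}F_{i'}(m)$, impossible for $n\ge2$ since $a_km+b_k>0$.

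With $M=[0,\sup M]$ in hand, item~(3) at $x=0$ follows by picking $x_n\downarrow0$ in $M$, applying item~(3) for $x_n>0$ to get two players whose supports contain $x_n$, and using the pigeonhole principle to fix one pair of players that captures infinitely many $x_n$; their closed supports then contain $0$. Item~(2) is then immediate, since for $x\in\overline{Supp_i}$ item~(3) gives some $i'\ne i$ with $x\in\overline{Supp_{i'}}$, so $\overline{Supp_i}\subseteq\bigcup_{i'\ne i}\overline{Supp_{i'}}$, and the reverse inclusion is trivial. I expect the bookkeeping with atoms and ties to be the main obstacle: the gap argument needs the observation that bidding at a gap's right endpoint is no better than bidding just inside it unless someone else has an atom there, which is then bootstrapped through the ``at most one atom'' property; and the $0\in M$ step hinges on the factor-$n$ tie discount at $\inf M$ clashing with the linearity bound.
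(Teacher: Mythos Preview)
Your proof is correct and follows essentially the same approach as the paper: use Lemma~\ref{linear}'s linearity together with Lemma~\ref{distribution} to rule out gaps in the union and points covered by only one player's support. You reorder the items (proving (3) for $x>0$ first and deducing (2) from it, whereas the paper proves (2) directly and reads (3) off from it), and you add an explicit tie-discount argument that $0\in M$, a step the paper's own proof does not spell out.
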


Our next lemma states that, for any player, the support of their strategy in a Nash equilibrium either includes the point $\{0\}$ or contains points arbitrarily close to 0.

\begin{lemma}\label{epsilon is supprot}
    Let $\epsilon > 0$, which can be chosen to be arbitrarily small. 
    For any Nash equilibrium $(F_i)_{i \in [n]}$, $\forall i$, we have $[0, \varepsilon] \cap Supp_i \neq \emptyset$.
\end{lemma}

The following lemma is key to the proof of Theorem \ref{NE characterize}.
It asserts that the support of each equilibrium strategy, excluding $\{0\}$, is a single continuous interval, and such continuous intervals of all players share the same right endpoint.
\begin{lemma}\label{continuous interval}
    For any Nash equilibrium $(F_i)_{i \in [n]}$, $\forall i$, we have that there exists $L > 0$ such that $\sup \overlinex{Supp_i} = L$, and $\overlinex{Supp_i} \cap (0, L) = (c_i, L)$, for some $c_i \ge 0$.
\end{lemma}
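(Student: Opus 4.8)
The plan is to combine the affine structure of equilibrium utilities (Lemma~\ref{linear}) with the topology of the supports (Lemma~\ref{closed set of supprot}), the at-most-one-atom property (Lemma~\ref{distribution}), the positivity $u_i>0$ on $(0,\infty)$ (Lemma~\ref{u is positive}), and the standard facts that $\mathbb{E}_{F_i}[X_i]=B_i$ in equilibrium and $L<\infty$ (Lemma~\ref{lem-bounded-eq}). First I would record that $0<L<\infty$: boundedness is exactly Lemma~\ref{lem-bounded-eq} (or the threshold $T$), and $L>0$ since a profile in which every player bids $0$ with probability one is not an equilibrium for $n\ge 2$. Throughout I will use that for $x>0$ one has $u_i(x,F_{-i})=\prod_{j\ne i}F_j(x)$ up to the single possible tie at $x$ (Lemma~\ref{distribution}), so that freezing some $F_j$ to a constant on an interval simply deletes a factor from every opponent's utility on that interval.

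\textbf{Tops coincide.} To prove $\sup\overline{Supp_i}=L$ for all $i$, I would argue by contradiction and ``peel from the top''. Suppose $L_i:=\sup\overline{Supp_i}<L$; take $i$ achieving the smallest such value, so $F_i\equiv 1$ on $[L_i,L]$. Then on $(L_i,L)$ every opponent's winning probability becomes a product over one fewer player, yet by Lemma~\ref{linear} each opponent active on $(L_i,L)$ still has affine utility there with strictly positive slope, and by Lemma~\ref{closed set of supprot}(1)--(3) at least two opponents are active. Iterating the reduction---each round removing the currently active player with the smallest support-supremum, whose cdf is then pinned to $1$ and drops out of every product on the remaining top subinterval---reaches, after finitely many rounds, a top subinterval on which a single player is active, forcing its utility to be an empty product identically equal to $1$, which contradicts the positive slope. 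The delicate bookkeeping is that ``$\ge 2$ active players'' must survive each round; this I would obtain by applying Lemma~\ref{closed set of supprot}(3) to points tending to $L$, and the ties in the $u_i$'s cause no trouble by Lemma~\ref{distribution}.

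\textbf{No interior gaps and the left endpoint.} If $\overline{Supp_i}\cap(0,L)$ fails to be an interval, pick a maximal open gap $(p,q)\subseteq(0,L)\setminus\overline{Supp_i}$ with $p,q\in\overline{Supp_i}$, so $F_i$ is frozen on $[p,q]$. On $(p,q)$ every other active player $j$ has $u_j(x,F_{-j})=F_i(p)\prod_{j'\ne i,j}F_{j'}(x)$, which must be affine by Lemma~\ref{linear}; intersecting the affine constraints of the (at least two, by Lemma~\ref{closed set of supprot}(3)) players active on $(p,q)$ with monotonicity of all cdfs and with $F_i$ being frozen forces every cdf to be essentially frozen across $(p,q)$, so no active utility can keep its positive slope there---a contradiction, with interior atoms again handled by Lemma~\ref{distribution}. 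Together with ``tops coincide'' this gives $\overline{Supp_i}\cap(0,L)=(c_i,L)$ for some $c_i\ge 0$; that $c_i>0$ for a player whose budget is not maximal follows because $c_i=0$ would make $\overline{Supp_i}$ accumulate at $0$, giving $x_m\downarrow 0$ in $Supp_i$ with $\prod_{j\ne i}F_j(x_m)=a_ix_m+b_i$, while by the last item of Lemma~\ref{lem-continuous} (read through the construction of $F$ from $\tilde F$) some opponent has no mass accumulating at $0$; a short budget-respecting deviation near $0$, in the spirit of the last bullet of Lemma~\ref{lem-continuous} and of the $|\{i:b_i>0\}|\le 1$ argument in Lemma~\ref{linear}, then rules this out. (A largest-budget player may have $c_i=0$, consistently with Lemma~\ref{closed set of supprot}(1).)

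I expect the main obstacle to be ruling out interior gaps. The argument has to reconcile the affine-utility conditions of several players simultaneously while their supports, and the positions of their atoms, are interleaved inside $(p,q)$; for $n\ge 3$ this most likely has to be organized as an induction on $n$ (freezing $F_i$ on $(p,q)$ produces a Lotto-like sub-configuration on one fewer player), and since it couples with the ``tops coincide'' step, the two statements may in the end have to be proved together rather than sequentially.
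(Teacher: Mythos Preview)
Your plan has a genuine gap in both halves, and it misses the paper's key computational idea. In the ``no interior gaps'' step, the claim that the affine constraints for the players active on $(p,q)$, together with $F_i$ frozen, force every cdf to be constant on $(p,q)$ is false. If $S$ is the set of active players with $|S|=s\ge 2$ and the players outside $S$ contribute a constant factor $\lambda$, then the equations $\lambda\prod_{k\in S\setminus\{j\}}F_k(x)=a_jx+b_j$ (one for each $j\in S$) multiply to $\bigl(\prod_{k\in S}F_k(x)\bigr)^{s-1}\propto\prod_{j\in S}(a_jx+b_j)$, so in the generic case $b_j=0$ one gets $\prod_{k\in S}F_k(x)\propto x^{s/(s-1)}$; the active cdfs move according to a power law and no contradiction arises from them. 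Similarly, your ``tops coincide'' peeling cannot terminate at a single active player: Lemma~\ref{closed set of supprot}(3) forces at least two players to have supremum $L$, so the peel leaves $s\ge 2$ players on the top sub-interval and the ``empty product equals $1$'' contradiction never fires.

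The paper's argument uses precisely the computation above, but turned against the \emph{gapped} player rather than the active ones. It first narrows the gap to a sub-interval $(\underline{\alpha},\overline{\alpha})$ on which every player is either entirely active or entirely inactive (a reduction your outline omits), and then observes that on this sub-interval the gapped player's utility is $u_{i^*}(x)=\gamma x^{t}$ with $t=s/(s-1)>1$. Matching at the left endpoint $\underline{\alpha}\in\overline{Supp_{i^*}}$ gives $a_{i^*}=\gamma\,\underline{\alpha}^{\,t-1}$, and then $u_{i^*}'(x)=t\gamma x^{t-1}>a_{i^*}$ for $x>\underline{\alpha}$, so $u_{i^*}(x)>a_{i^*}x+b_{i^*}$ just to the right of $\underline{\alpha}$, contradicting Lemma~\ref{linear}. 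This strict-convexity argument also handles ``tops coincide'' in the same stroke, since a gap to the right of $\sup\overline{Supp_i}$ is just another instance of the same configuration.
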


By Lemma \ref{continuous interval}, let $L$ denote the common supremum of all players' supports, that is, $L = \sup Supp_i$ for every player $i$.
Using Lemmas \ref{distribution} to \ref{continuous interval}, we establish a complete characterization of the structure of Nash equilibrium.
Theorem \ref{NE characterize} states that when there is a single player with the largest budget, the support of this player's equilibrium strategy is a single continuous interval $[0, L]$. 
For the player with the second largest budget, the support of its equilibrium strategy is also a single continuous interval $[0, L]$, but it includes a non-zero measure at $0$. 
For all other players, the supports of their equilibrium strategies all include a non-zero measure at $0$, and are increasing in their budgets in subset relationship, as illustrated in the plot on the left-hand side of Figure \ref{fig-theorem-NE-characterize}.
When multiple players have the largest budget, the supports of their equilibrium strategies are all $[0, L]$. 
For the remaining players, the supports of their equilibrium strategies include a non-zero measure at $0$, and the supports are increasing in their budgets in subset relationship, as depicted in the plot on the right-hand side of the Figure \ref{fig-theorem-NE-characterize}.

\begin{figure}[ht]
    \centering
    \resizebox{0.7 \textwidth}{!}{
    \begin{tikzpicture}
        \node[scale=2.5] at (-2.0,3.5) {Player $1$};
        \node[scale=2.5] at (-2.0,2.5) {Player $2$};
        \node[scale=2.5] at (-2.0,1.5) {Player $3$};
        \node[scale=2.5] at (-2.0,0.5) {Player $4$};
        \draw[fill=black] (-2.0,0) circle (2pt);
        \draw[fill=black] (-2.0,-0.5) circle (2pt);   
        \draw[fill=black] (-2.0,-1) circle (2pt);
        \node[scale=2.5] at (-2.0,-1.5) {Player $n$};

        \node[scale=2.5] at (0,4.2) {$0$};
        \node[scale=2.5] at (7,4.2) {$L$};

        \draw[line width = 1mm] (0,3.5) -- (7,3.5); 
        \draw[fill=blue] (0,2.5) circle (4pt);
        \draw[line width = 1mm] (0,2.5) -- (7,2.5); 
        \draw[fill=blue] (0,1.5) circle (4pt);
        \draw[line width = 1mm] (1,1.5) -- (7,1.5); 
        \draw[fill=blue] (0,0.5) circle (4pt);
        \draw[line width = 1mm] (2,0.5) -- (7,0.5); 
        \draw[fill=blue] (0,-1.5) circle (4pt);
        \draw[line width = 1mm] (4,-1.5) -- (7,-1.5); 

        \node[scale=2.5] at (9.8,3.5) {Player $1$};
        \node[scale=2.5] at (9.8,2.5) {Player $2$};
        \node[scale=2.5] at (9.8,1.5) {Player $3$};
        \node[scale=2.5] at (9.8,0.5) {Player $4$};
        \draw[fill=black] (9.8,0) circle (2pt);
        \draw[fill=black] (9.8,-0.5) circle (2pt);   
        \draw[fill=black] (9.8,-1) circle (2pt);
        \node[scale=2.5] at (9.8,-1.5) {Player $n$};

        \node[scale=2.5] at (11.75,4.2) {$0$};
        \node[scale=2.5] at (18.75,4.2) {$L$};

        \node[scale=2.5] at (1.75, -3) {$B_1 > B_2 > B_3 > B_4 > \cdots > B_n$};

        \node[scale=2.5] at (14.3, -3) {$B_1 = B_2 > B_3 > B_4 > \cdots > B_n$};

        \draw[line width = 1mm] (11.75,3.5) -- (18.75,3.5); 
        \draw[line width = 1mm] (11.75,2.5) -- (18.75,2.5); 
        \draw[fill=blue] (11.75,1.5) circle (4pt);
        \draw[line width = 1mm] (12.75,1.5) -- (18.75,1.5); 
        \draw[fill=blue] (11.75,0.5) circle (4pt);
        \draw[line width = 1mm] (13.75,0.5) -- (18.75,0.5); 
        \draw[fill=blue] (11.75,-1.5) circle (4pt);
        \draw[line width = 1mm] (15.75,-1.5) -- (18.75,-1.5); 
    \end{tikzpicture}
    }
    \caption{The support of Nash equilibrium strategies, blue dot indicates mass point of distribution, black line represents support of distribution. The only difference between the left and right figures is that, in the left figure, player 2 has a mass point at $0$, whereas in the right figure, player 2 does not have a mass point at $0$.}
    \label{fig-theorem-NE-characterize}
\end{figure}

\begin{theorem}\label{NE characterize}
    Relabel the players so that $B_1 \ge B_2 \ge \cdots \ge B_n$, and define $i' \mathrel{\mathop:}= \max \{i: B_i = B_2\} $.
    For any Nash equilibrium $(F_{i})_{i \in [n]}$ and supports $(Supp_i)_{i \in [n]}$ of equilibrium strategies, there exists $L > 0$ such that:
    \begin{enumerate}
        \item If $B_1 > B_2$, then we have
        \begin{enumerate}
            \item $\overlinex{Supp_1} = [0, L]$, $F_1(0) = 0$,
            \item $\overlinex{Supp_i} = [0, L]$ and $F_i(0) > 0$ for all $i \in \{2, 3, \cdots, i'\}$,
            \item $\overlinex{Supp_i} = \{0\} \cup [h_i, L]$ and $F_i(0) > 0$, with $h_{i'+1} > 0$ and $h_i \geq h_{i - 1}$, for all $i \geq i' + 1$.
        \end{enumerate}
        \item If $B_1 = B_2$, then we have
        \begin{enumerate}
            \item $\overlinex{Supp_i} = [0, L]$ and $F_i(0) = 0$ for all $i \in \{1, 2, \cdots, i'\}$.
            \item $\overlinex{Supp_i} = \{0\} \cup [h_i, L]$ and $F_i(0) > 0$, with $h_{i'+1} > 0$ and $h_i \geq h_{i - 1}$, for all $i \geq i' + 1$.
        \end{enumerate}
    \end{enumerate}
\end{theorem}

Theorem \ref{NE characterize} provides the structure of the support of Nash equilibrium strategies for multiple players on a single battlefield.
These structures are crucial for determining Nash equilibrium strategies.
Based on these structures, we can deduce the relationships between the strategies of the players.
We find that, whether $B_1 > B_2$ or $B_1 = B_2$, the strategy of player 2 is closely related to the strategies of the subsequent players.
The following lemma presents two important properties: (1) the strategy of any player $i \in \{3, 4, \cdots, n\}$ is identical to that of player 2 on player $i$'s support set, excluding $\{0\}$, and (2) player 1 and player 2 adopt the same strategy if $B_1 = B_2$.

\begin{lemma}\label{uniqueness-Fi=F2}
    Given the budget vector $\Vec{B} = (B_1, B_2, \cdots, B_n)$ satisfies $B_1 \geq B_2 \geq B_3 \geq \cdots \geq B_n$, then we have that for any player $i \in \{3, 4, \cdots, n\}$, $F_i(x) = F_2(x)$ for all $x \in Supp_i \backslash \{0\}$.
    Additionally, if $B_1 = B_2$, then $F_1(x) = F_2(x)$ for all $x \in Supp_2$.
\end{lemma}

Let $p_i$ denote the probability of the player $i$ bidding 0 and $h_i = \inf (\overlinex{Supp_i \backslash \{0\}})$.
The following lemma describes the relationship between $p_i$ and $h_i$ for player $i \in \{3, 4, \cdots, n\}$.

\begin{lemma}\label{uniqueness-probability-bidding-0}
    For player $i \in \{3, 4, \cdots, n\}$, $F_i(h_i) = F_2(h_i) = p_i$.
\end{lemma}

According to Theorem \ref{NE characterize}, Lemma \ref{uniqueness-Fi=F2} and Lemma \ref{uniqueness-probability-bidding-0}, we can observe that in Nash equilibrium, the strategy of player $i \in [n]$ can be expressed by $F_1$, $F_2$, $h_i$, $L$.
We can set up a system of equations to solve for the Nash equilibrium.
The system can be divided into two cases:
\begin{itemize}
    \item Case (1): $B_1 > B_2 \geq B_3 \geq \cdots \geq B_n$.
    \item Case (2): $B_1 = B_2 \geq B_3 \geq \cdots \geq B_n$.
\end{itemize}

Owing to space constraints, we only present the system of equations corresponding to the Case (2), while the system for Case (1) is included in the Appendix \ref{Equation system}.

Define $Q_i = \prod_{r > i} p_r$.
For Case (2), we have $B_1 = B_2 = \cdots = B_{i'} > B_{i'+1} \geq \cdots \geq B_n$, which implies that $0 < h_{i'+1} \leq \cdots \leq h_n$, $0 = h_1 = \cdots = h_{i'}$, and $p_i > 0$, $\forall i \geq i'+1$.
Therefore, the system of equations is as follows:
\begin{align*}
    & \forall x \in [h_n, L], \forall i \in [n], \,\,
    \begin{cases}
        F_i(x) = (\frac{x}{L})^\frac{1}{n-1}, \\
        \int_{h_n}^L xf_n(x)dx = B_n. 
    \end{cases} \\
    & \forall x \in [h_r, h_{r+1}], \forall r \in \{i'+1, \cdots, n-1\}, \forall i \leq r, \,\,
    \begin{cases}
        F_i(x) = \left[\frac{x}{L Q_{r}}\right]^\frac{1}{r-1}, \\
        \int_{h_r}^{h_{r+1}} xf_r(x)dx = B_r - B_{r+1}.
    \end{cases} \\
    & \forall x \in [0, h_{i'+1}], \forall r \in \{1, 2, \cdots, i'\}, \,\,
    \begin{cases}
        [F_r(x)]^{i'-1} Q_{i'} = \frac{x}{L}, \\
        \int_{0}^{h_{i' + 1}} x f_{i'}(x)dx = B_{i'} - B_{i'+1}
    \end{cases}
\end{align*}

We derive the following corollary based on the system of equations, which applies to the case where there are two players with $B_1 \geq B_2$ competing on a single battlefield.
It is important to note that our corollary aligns with the Nash equilibrium results presented in the paper \citep{r14}.
\begin{corollary}\label{cor:two-player-single-battlefield}
    If there are only two players with $B_1 \geq B_2$ and a single battlefield, then their Nash equilibrium strategies are
    \begin{align*}
        F_1(x) = \frac{x}{2B_1}, \quad x \in [0, 2B_1]; \quad
        F_2(x) = \frac{B_2}{2B_1^2}x + 1 - \frac{B_2}{B_1}, \quad x \in [0, 2B_1].
    \end{align*}
\end{corollary}

\subsection{The Uniqueness of the Nash Equilibrium}
Given that there are at least two players with the maximum budget, i.e., the budget vector $\Vec{B} = (B_1, B_2, \cdots, B_n)$ satisfies $B_1 = B_2 \geq B_3 \geq \cdots \geq B_n$, we prove the uniqueness of the Nash equilibrium.
With the help of the system of equations,
we derive Theorem \ref{theorem-uniqueness-Nash-equilibrium}, which establishes the uniqueness of the Nash equilibrium.

\begin{theorem}\label{theorem-uniqueness-Nash-equilibrium}
    If the budget vector $\Vec{B} = (B_1, B_2, \cdots, B_n)$ satisfies $B_1 = B_2 \geq B_3 \geq \cdots \geq B_n$, then the Nash equilibrium for the GL game with a single battlefield is unique.
\end{theorem}

Assume the budget vector $\Vec{B}$ satisfies $B_1 = B_2 = \cdots = B_{i'} > B_{i'+1} \ge \cdots \ge B_n$.
By Theorem \ref{NE characterize}, we have $0 = h_1 = h_2 = \cdots = h_{i'} < h_{i'+1} \le \cdots \le h_n$, and by Lemma \ref{uniqueness-probability-bidding-0}, $0 = p_1 = p_2 = \cdots = p_{i'} < p_{i'+1} \le \cdots \le p_n$.
Lemmas \ref{uniqueness-Fi=F2} and \ref{uniqueness-probability-bidding-0} together imply that, in Nash equilibrium, every player's strategy can be expressed in terms of player 2's strategy $F_2$.
Hence it suffices to show that player 2's strategy $F_2$ is unique.
Here, we define $M_i = \prod_{i' \geq i} p_{i'}$ and according to the system of equations above, we can observe (1) in the interval $[0, h_{i'+1}]$, player 2's strategy $F_2$ depends only on $L$ and $M_{i'+1}$; (2) for $r \in \{i'+1, \cdots, n-1\}$, in $[h_r, h_{r+1}]$ it depends on $L$ and $M_{r+1}$; (3) in $[h_n, L]$ it depends solely on $L$.
Thus we need only verify the uniqueness of $L$, $M_{i'+1}$ and $M_{r+1}$, $r \in \{i'+1, \cdots, n-1\}$.
Starting from player $n$ and moving backwards, we can compute each $M_i$ and $L$ uniquely from successive budget differences $B_i-B_{i+1}$.
This backward induction shows that $L$ and all relevant $M$-values are uniquely determined; consequently $F_2$ is unique, and therefore the Nash equilibrium itself is unique.

\section{Analysis of Nash Equilibrium with Multiple Battlefields}
In this section, we analyze the properties of the Nash equilibrium in the General Lotto game with multiple battlefields.
When there are multiple battlefields, a player may choose to abandon a battlefield by consistently bidding 0 on it.
We consider two extreme cases:
\begin{itemize}
    \item Suppose that all valuations $ v_{ij} $ are independently and randomly drawn from some continuous distributions, then with probability one, the condition $ \frac{v_{i,j}}{v_{i,j'}} \neq \frac{v_{i',j}}{v_{i', j'}} $ holds for all $i \ne i'$ and $j \ne j'$.
    We find that the average number of battlefields in which each player participates (i.e., bids larger than 0 with positive probability) becomes arbitrarily close to one as $n$ becomes sufficiently large (see Theorem \ref{theorem-MultiBattle-Average-Player}).
    \item In the symmetric setting, where for any player $ i $, $ B_i = B $, and for any two players $ i \neq i' $ and any battlefield $ j $, $ v_{ij} = v_{i'j} = v_j $, we provide a solution for the Nash equilibrium (see Theorem \ref{theorem-NE-multibattlefields}).
\end{itemize}

Finally, we examine intermediate case between these two extremes by providing an example to illustrate the non-uniqueness of the Nash equilibrium.

\begin{theorem}\label{theorem-MultiBattle-Intersection-Player}
    For any Nash equilibrium and any two battlefields $j \ne j'$, let $S$ and $S'$ represent the sets of players who do not always bid 0 on $j$ and $j'$, respectively.
    If $\frac{v_{i,j}}{v_{i,j'}} \neq \frac{v_{i',j}}{v_{i', j'}}$ holds for all $i \ne i'$, then the cardinality of the set $S \cap S'$ is not greater than 3.
\end{theorem}
\begin{proof}
    Let $F_{\cdot, \cdot} = (F_{i,j})_{i \in [n], j \in [m]}$ denote an arbitrary equilibrium.
    By Theorem \ref{NE characterize}, for any $\delta > 0$, every $i \in S$ should bid in $[L_j - \delta, L_j]$ with positive probability on battlefield $j$.
    By Lemma \ref{lem-uniform-converge}, $u_{i,j}(\cdot, F_{-i, j})$ is continuous on $(0, T]$ (the definition of $T$ is given in Lemma \ref{lem-bounded-eq}), then by Lemma \ref{lem-cont-linear} we have $u_{i,j}(L_j, F_{-i, j}) = a_i L_j + b_{i,j}$.

    Consider an arbitrary $i \in S \cap S'$.
    \begin{enumerate}
        \item If $b_{i, j} = b_{i, j'} = 0$, then $u_{i,j}(L_j, F_{-i, j}) = a_i L_j = v_{i,j}$ and $u_{i,j'}(L_{j'}, F_{-i, j'}) = a_i L_{j'} = v_{i,j'}$, therefore $\frac{v_{i,j}}{v_{i,j'}} = \frac{a_i L_j}{a_i L_{j'}} = \frac{L_j}{L_{j'}}$.
        The right-hand side does not depend on $i$.
        Since $\frac{v_{i,j}}{v_{i,j'}} \neq \frac{v_{i', j}}{v_{i', j'}}$, for any given $(j, j')$ there can be at most one such $i$.
        \item If $b_{i, j} = 0$ and $b_{i, j'} > 0$, by Lemma \ref{lem-one-pos-b} we have at most one such $i$ given $j'$.
        \item If $b_{i, j}>0$, again by Lemma \ref{lem-one-pos-b}, we have at most one such $i$ given $j$.
    \end{enumerate}
    We conclude that at most three players in the set $S \cap S'$.
\end{proof}

Based on Theorem \ref{theorem-MultiBattle-Intersection-Player}, we can derive the following theorem.

\begin{theorem}\label{theorem-MultiBattle-Average-Player}
    Let $D_i$ denote the set of battlefields where player $i$ does not always bid 0, and $d_i$ denote the cardinality of the set $D_i$.
    If $\frac{v_{ij}}{v_{ij'}} \neq \frac{v_{i'j}}{v_{i'j'}}$ holds for all pairs of players $i \neq i'$ and all pairs of battlefields $j \neq j'$, we have $\frac{1}{n} \sum_{i = 1}^n d_i < 1 + m \sqrt{\frac{3}{n}}$.
\end{theorem}
\begin{proof}
    Let $S_j$ denote the set of players who do not always bid 0 on battlefield $j$, and $S_{j'}$ denote the set of players who do not always bid 0 on $j'$.
    By Theorem \ref{theorem-MultiBattle-Intersection-Player}, we have $\sum_{(j, j')} |S_j \cap S_{j'}| \leq 3 \cdot \binom{m}{2}$.
    Note that $\sum_{(j, j')} |S_j \cap S_{j'}| = \sum_{i = 1}^n \binom{d_i}{2}$.
    We establish the following inequality
    \begin{align*}
        \sum_{i = 1}^n \binom{d_i}{2} = \sum_{(j, j')} |S_j \cap S_{j'}| \leq 3 \cdot \binom{m}{2} = \frac{3m(m - 1)}{2}.
    \end{align*}
    Using the Cauchy inequality, we have
    \begin{align*}
        \sum_{i = 1}^n \binom{d_i}{2} \geq \frac{1}{2n} (\sum_{i = 1}^n d_i)^2 - \frac{1}{2} \sum_{i = 1}^n d_i.
    \end{align*}
    Therefore, we obtain
    \begin{align*}
        \frac{1}{2n} (\sum_{i = 1}^n d_i)^2 - \frac{1}{2} \sum_{i = 1}^n d_i \leq \frac{3m(m - 1)}{2}.
    \end{align*}
    Let $\zeta = \frac{1}{n} \sum_{i = 1}^n d_i$, we have $n \zeta^2 - n \zeta - 3m(m-1) \leq 0.$
    Solving this inequality, we obtain $\zeta \leq \frac{1}{2} + \sqrt{\frac{1}{4} + \frac{3m(m-1)}{n}}$.
    Furthermore, 
    \begin{align*}
        \sqrt{\frac{1}{4} + \frac{3m(m-1)}{n}} < \frac{1}{2} + m \sqrt{\frac{3}{n}}.
    \end{align*}
    Finally, we have $\frac{1}{n} \sum_{i = 1}^n d_i < 1 + m \sqrt{\frac{3}{n}}$.
\end{proof}

Theorem \ref{theorem-MultiBattle-Average-Player} states that for a fixed number of battlefields, the average number of battlefields in which each player participates can be arbitrarily close to one as $n$ becomes sufficiently large.
Specifically, for $n = m^2$, the average value is less than 2.73.

Although the computation of the Nash equilibria is highly complex in the multi-player and multi-battlefield setting, it can be computed in certain symmetric settings.
The following theorem provides a symmetric solution of the Nash equilibria in a symmetric setting.
\begin{theorem}\label{theorem-NE-multibattlefields}
    Suppose we have $B>0$ such that $B_i = B$ for every $i$, and for every $j$ we have $v_{j}>0$ such that $v_{ij} = v_j$ for every $i$. Then the strategies given by $F_{ij}(x) = (\frac{vx}{v_j n B})^\frac{1}{n-1}$ for every $i$ and $j$, where $v = \sum_{j \in [m]} v_j$, is a Nash equilibrium.
\end{theorem}
\begin{proof}
    Consider the symmetric Nash equilibrium.
    
    Let $e_{ij}$ denote the expected bid of player $i$ on battlefield $j$, and $L_j$ denote the upper bound of the support on the battlefield $j$. 
    Note that all players have the same expected bid on battlefield $j$, and all players play the same strategy, i.e., $e_{ij} = e_j$ and $F_{ij}(x) = F_{j}(x)$ for $\forall i$.
    Therefore, we have
    \begin{align*}
        u_{ij} = v_j (F_{j}(x))^{n-1} = \frac{v_j}{L_j}x.
    \end{align*}
    So we can derive
    \begin{align*}
        \begin{cases}
            F_j(x) = (\frac{x}{L_j})^\frac{1}{n-1}, \\
            f_j(x) = \frac{1}{n-1}\frac{1}{L_j} (\frac{x}{L_j})^\frac{2-n}{n-1}.
        \end{cases}
    \end{align*}

    The expected value of the distribution $F_j(x)$ is equal to the budget. Therefore, we have
    \begin{align*}
        e_j = \int_0^{L_j} x f_j(x) dx = \int_0^{L_j} x \frac{1}{n-1}\frac{1}{L_j} (\frac{x}{L_j})^\frac{2-n}{n-1} dx = \frac{L_j}{n}.
    \end{align*}
    Thus we get $L_j = n e_j$. 
    By Lemma \ref{lem-cont-linear}, we have $\frac{v_1}{L_1} = \frac{v_j}{L_j}$ and $L_j = \frac{v_j}{v_1}L_1$ for $\forall j$.
    We can obtain $n e_j = \frac{v_j}{v_1}L_1$, thereby $e_j = \frac{v_j L_1}{v_1 n}$.
    
    Due to the budget constraints, we have
    \begin{align*}
        \sum_{j \in [m]} e_j = \sum_{j \in [m]} \frac{v_j L_1}{v_1 n} = \frac{L_1 v}{v_1 n} = B,
    \end{align*}
    we get $L_1 = \frac{v_1 n B}{v}$, and $L_j = \frac{v_j n B}{v}$, $\forall j \in [m]$.
    Finally, we have $F_j(x) = \left(\frac{x}{\frac{v_j n B}{v}}\right)^\frac{1}{n-1} = \left(\frac{v x}{v_j n B}\right)^\frac{1}{n-1} = F_{ij}(x)$.
\end{proof}

Following Theorem \ref{theorem-NE-multibattlefields}, we derive the following corollary, which applies to the case where there are two players with $B_1 = B_2$ competing across multiple battlefields where the values of these battlefields are symmetric between the players.
Our corollary \ref{corollary-symmetric} also implies the Nash equilibrium in the symmetric setting established by \citep{r14}.
\begin{corollary}\label{corollary-symmetric}
    If there are only two players with $B_1 = B_2 = B$ and multiple battlefields with $v_{1j} = v_{2j}$ for all $j$, then
    \begin{align*}
        F_{1j}(x) = F_{2j}(x) = \frac{v}{2 B v_j}x, \quad x \in [0, \frac{2 B v_j}{v}],
    \end{align*}
    is a Nash equilibrium.
\end{corollary}

Although we prove in the previous section that Nash equilibrium is unique when at least two players have the maximum budget, this uniqueness does not hold when there are multiple battlefields.
Here is an example.

\paragraph{\textbf{Example.}}
Consider a game with three players and two battlefields.
The players have budgets of $B_1 = 10$, $B_2 = 6$, $B_3 = 6$, respectively, and each player assigns a value of 1 to each battlefield.

For this example, we examine two budget vectors:
\begin{enumerate}[(1)]
    \item $\vec{B}^{(1)} = ((5, 4 ,2), (5, 2, 4))$, in which the expectation of bids on the first battlefield is $(5, 4 ,2)$ corresponding to the resources invested by player 1, 2, and 3 in that battlefield respectively, and the expected bids on the second battlefield is $(5, 2, 4)$.
    \item $\vec{B}^{(2)} = ((5, 3 ,3), (5, 3 ,3))$, in which the expected bids on the first battlefield is $(5, 3 ,3)$ and the expected bids on the second battlefield is $(5, 3 ,3)$.
\end{enumerate}

Let $p_{ij}$ denote the probability that player $i$ bids 0 on battlefield $j$, and let $L_j$ denote the upper endpoint of the players' support on battlefield $j$.

In the first budget vector, it is easy to see that $p_{11} = p_{12}$, $p_{21} = p_{32}$, $p_{31} = p_{22}$ and $L_1 = L_2 = 11.9231$.

In the second budget vector, it follows that $p_{11} = p_{12}$, $p_{21} = p_{22}$, $p_{31} = p_{32}$ and $L_1 = L_2 = 12.5547$.
We observe the symmetry in strategy between the two budget vectors $\vec{B}^{(1)}$ and $\vec{B}^{(2)}$.

Therefore, these strategies indeed form a Nash equilibrium.

\section{Conclusion}
We extend the General Lotto game from a two-player game to a multiplayer game in a general setting, proving the existence of Nash equilibrium in the multiplayer version and providing a detailed characterization of the Nash equilibrium on a single battlefield. 
Additionally, we establish the uniqueness of Nash equilibrium in certain single-battlefield scenarios.
Finally, we generalize the game from a single battlefield to multiple battlefields.
Under the setting of multiple battlefields, we find that the average number of battlefields in which each player participates (i.e., bids larger than 0 with positive probability) becomes arbitrarily close to one as $n$ becomes sufficiently large.
We also show the non-uniqueness of Nash equilibrium by providing an example.

\section*{Acknowledgments}
Zihe Wang was supported by the National Natural Science Foundation of China (Grant No.~62172422), the Fund for Building World-Class Universities (Disciplines) of Renmin University of China, and the Key Laboratory of Interdisciplinary Research of Computation and Economics (Shanghai University of Finance and Economics), Ministry of Education. 
Weiran Shen was supported by the National Natural Science Foundation of China (Grant No.~72192805). Jie Zhang was partially supported by a Leverhulme Trust Research Project Grant (2021--2024) and the EPSRC grant (EP/W014912/1). 
This work was also supported by the Public Computing Cloud of Renmin University of China.

\bibliographystyle{named}
\bibliography{reference}

\newpage
\appendix
\renewcommand{\thesection}{\Alph{section}}
\section{Proofs for Section 2}\label{appendix-sec-2}
\renewcommand{\thesubsection}{\thesection.\arabic{subsection}}
\subsection{Proof of Lemma \ref{lem-cont-linear}}
\begin{proof}
    For every $j \in [m]$, we denote
    \begin{align*}
        \alpha^-_{i,j} \mathrel{\mathop:}=  \inf \{a | \Pr_{X_{i,j} \sim F_{i,j}} \big[ u_{i,j}(x, F_{-i, j}) \le u_{i,j}(X_{i,j}, F_{-i, j}) + a \cdot (x-X_{i,j}), \forall x > X_{i,j} \big] = 1\},
    \end{align*}
    \begin{align*}
        \alpha^+_{i,j} \mathrel{\mathop:}= \sup \{a | \Pr_{X_{i,j} \sim F_{i,j}} \big[ u_{i,j}(x, F_{-i, j}) \le u_{i,j}(X_{i,j}, F_{-i, j}) +  a \cdot (x-X_{i,j}), \forall x < X_{i,j} \big] = 1\}. 
    \end{align*}
    These two notations represent the largest and smallest possible values of the coefficient $a$ if we only focus on battle $j$.
    The lemma is true if
    \[
        \max_{j \in [m]} \alpha^-_{i,j} \le \min_{j \in [m]} \alpha^+_{i,j} < \infty, \forall i \in [n]
    \]
    holds for any Nash equilibrium, because in this case for every $i$ we can pick any $\max_{j \in [m]} \alpha^-_{i,j} \le a_i \le \max_{j \in [m]} \alpha^+_{i,j}$ and compute a corresponding $b_{i,j}$ for every $j \in [m]$. 
    Note that $a_i \le 0$ only if $i$ wins every battlefield w.p. 1, which cannot be the case for any Nash equilibrium, therefore we have $a_i > 0$.
    Since winning probability by bidding 0 is nonnegative, we have $b_{i,j} \ge 0$.
    
    Suppose for some $i \in [n]$, we have
    \[
        \max_{j \in [m]} \alpha^-_{i,j} > \min_{j \in [m]} \alpha^+_{i,j},
    \]
    then there exists $j, j' \in [m]$ and $\tilde a$ s.t. $\alpha^-_{i,j} > \tilde a > \alpha^+_{i,j'}$.
    This means there exists $M, M' \subseteq [0, +\infty)$ and $\tilde x \in [0, +\infty) \backslash M, \tilde x' \in [0, +\infty) \backslash M'$ with $\tilde x \ge \sup M$, $\tilde x' \le \inf M'$ s.t. 
    $F_{i, j}$ assigns positive probability to $M$, $F_{i, j'}$ assigns positive probability to $M'$, and we have
    \begin{align*}
        & \frac{u_{i,j}(\tilde x, F_{-i, j}) - u_{i,j}(x, F_{-i, j})}{\tilde x-x} > \tilde a, \forall x \in M,\\
        & \frac{u_{i,j}(\tilde x', F_{-i, j}) - u_{i,j}(x, F_{-i, j})}{\tilde x'-x} < \tilde a, \forall x \in M'.
    \end{align*}
    Then, suppose $i$ can bid $\tilde x$ on battlefield $j$ with a small probability $\delta > 0$ when it bids in $M$ according to $F_{i,j}$, and bid $\tilde x'$ on $j'$ with a small probability $\delta' > 0$ when it bids in $M'$ according to $F_{i,j'}$. 
    For appropriate $\delta$ and $\delta'$, $i$'s expected total bid does not increase, while its expected utility increases. 
    This contradicts with Nash equilibrium condition.

    Then suppose for some $i \in [n]$, we have 
    \[
        \max_{j \in [m]} \alpha^-_{i,j} \le \min_{j \in [m]} \alpha^+_{i,j}, \forall i \in [n],
    \]
    however $\alpha^+_{i,j} = \infty$ for every $j \in [m]$. 
    Since $u_{i,j}$ is lower bounded by $0$, this holds only if $i$ bids 0 with probability 1 on every battlefield. 
    In this case, $i$'s strategy cannot be a best response.
\end{proof}

\subsection{Proof of Lemma \ref{lem-one-pos-b}}
\begin{proof}
    For given $j$, we consider the infimum value of the player $i$'s bid support on battlefield $j$, denoted by $\underline{Supp_{i,j}}$. W.l.o.g., we have $\underline{Supp_{i,j}}=\inf\{x: F_{i,j}(x)>0\}$. 
    Obviously, we have $u_{i,j}(x,F_{-i,j})=0$ for $x\in [0,\underline{Supp_{i',j}}), \forall i' \ne i$.

    Suppose there are two player $i\neq i'$ such that $\underline{Supp_{i,j}}<\underline{Supp_{i',j}}$, we claim $\underline{Supp_{i,j}}=0$. 
    Actually, we have $u_{i,j}(x,F_{-i,j})=0, \forall x<\underline{Supp_{i',j}}$. 
    For any element $y$ in the set $Supp_{i,j}\bigcap [0, \underline{Supp_{i',j}})$, we have $u_{i,j}(y,F_{-i, j})=0$. 
    By Lemma \ref{lem-cont-linear}, $y$ must be zero. 
    Therefore, $\underline{Supp_{i,j}}=0$.

    We let $K$ denote the set of players whose infimum of support on $j$ is nonzero. 
    If $|K| \ge 2$, by the above argument, all players in $K$ have the same infimum of support. 
    We denote this infimum of support by $\underline{Supp_K}$. 
    Suppose for every player $i\in K$, it holds that player $i$ puts a mass at $\underline{Supp_K}$ in $F_{i,j}$. 
    If $\underline{Supp_K} = T$, then every player only bids $0$ and $T$, and for at least $n-1$ players, the probability of winning by bidding 0 is equal to zero, therefore the claim holds. 
    Otherwise, each player in $K$ would like to deviate by reporting slightly larger than $\underline{Supp_K}$ instead of $\underline{Supp_K}$ to avoid tie-breaking so that its expected utility increases, a contradiction. 
    If there exists one player $i$ who does not put a mass at $\underline{Supp_K}$, we have $u_{i',j}(\underline{Supp_K}^+, F_{-i', j})=0$ for all $i' \neq i$.
    However Lemma \ref{lem-cont-linear} tells that $u_{i',j}(x) = a_{i'}x + b_{i',j}$ for all $x \in Supp_{i',j}$, which contradicts with $\underline{Supp_K} > 0$ and $b_{i',j} \ge 0$.
    
    Therefore we have $|K| \le 1$. If $|K| = 1$, we have $u_{i,j}(x, F_{-i,j}) = 0$ for all $i \in [n] \backslash K$ and all $x \in [0, \underline{Supp_K})$.
    Moreover, we have $[0, \underline{Supp_K}) \cap \underline{Supp_{i,j}} \ne \emptyset$ for all $i \notin K$, therefore $b_{i,j} = 0$ for all $i \notin K$.
    
    If $|K| = 0$, by definition of $K$, there is $|\{i|b_{i,j}>0\}|> 1$ only if every player bids 0 on battlefield $j$.
    In this case, any of them can benefit from bidding slightly larger than 0 to avoid any tie-breaking at 0, which contradicts with the equilibrium condition.
\end{proof}

\section{Proofs for Section 3}\label{appendix-sec-3}
\subsection{Proof of Lemma \ref{lem-gk-nash}}
\begin{proof}
    Consider the set-valued function $\beta_k: \mathcal{F}_k \rightarrow 2^{\mathcal{F}_k}$ given by
    \begin{align*}
        \beta_k(F_{\cdot,\cdot,k}) \mathrel{\mathop:}= \{F'_{\cdot,\cdot,k} | F_{i,\cdot,k}' \in \argmax_{D \in \mathcal{F}_{i, k}} U_i(D, F_{-i,\cdot,k}), \forall i \in [n]\}, \forall F_{\cdot,\cdot,k} \in \mathcal{F}_k.
    \end{align*}
    By its definition, the fixed point of $\beta_k$, which is $F_{\cdot,\cdot,k}$ satisfying $F_{\cdot,\cdot,k} \in \beta_k(F_{\cdot,\cdot,k})$, is a Nash equilibrium of $\mathcal{G}_k$.
    $\beta_k$ satisfies the following properties:
    \begin{enumerate}
        \item $\beta_k$ never takes $\emptyset$: $U_i$ is continuous, and $\mathcal{F}_k$ is a compact subset of $\Delta(A_k)^{mn}$, therefore the maximum in the definition of $\beta_k$ is always attainable for every player $i \in [n]$.
        \item The graph of $\beta_k$ is closed: For any sequence of strategy profiles converging in total variation distance, the corresponding sequence of $(U_i)_{i \in [n]}$ converges in infinite norm. 
        If a limit element in the corresponding output sequence of $\beta$ is not optimal, it has to be sub-optimal for some postfix of the $U_i$ sequence.
    \end{enumerate}
    
    By Kakutani's fixed point theorem, $\beta_k$ has a fixed point which is a Nash equilibrium of $\mathcal{G}_k$, and denote such a fixed point by $\tilde F_{\cdot,\cdot,k} = (\tilde F_{i,\cdot,k})_{i \in [n]}$. 
    As stated above, $\tilde F_{\cdot,\cdot,k}$ is a Nash equilibrium of $\mathcal{G}_k$.
\end{proof}

\subsection{Proof of Lemma \ref{lem-bounded-eq}}
In order to prove Lemma \ref{lem-bounded-eq}, we first state Lemma \ref{lem-linear}, which will be used in the proof of Lemma \ref{lem-bounded-eq}.
Lemma \ref{lem-linear} can be proved in almost the same way as Lemma \ref{lem-cont-linear}.
\begin{lemma}\label{lem-linear}
    $\tilde F_{\cdot,\cdot,k}$ is a Nash equilibrium of $\mathcal{G}_k$ only if there exists $a_i > 0$ for every $i \in [n]$ and $b_{ij} \ge 0$ for every $i \in [n], j \in [m]$ s.t. 
    \[
        \Pr_{x \sim F_{i, j, k}}\left[u_{i,j}(x, \tilde F_{-i, j, k}) = a_{i} x + b_{ij}\right] = 1,
    \]
    and
    \[
        u_{i,j}(x, \tilde F_{-i, j, k}) \le a_{i} x + b_{ij}, \forall x \in A_k
    \]
    holds for every $i \in [n], j \in [m]$.
\end{lemma}

Now we are ready to start proving Lemma \ref{lem-bounded-eq}.
\begin{proof}
    Denote $w \mathrel{\mathop:}= \max_{i \in [n]} B_i$, and then we have $T = 2^{2n+3} w$. 
    Suppose for some equilibrium $(F_{i,\cdot,k})_{i \in [n]}$ there exists some player $i'$ who bids in $[\frac{T}{2} + \frac{T}{k}, T]$ on some battlefield $j$ with positive probability. 
    Then there exists some player $i'' \ne i'$ bidding in $[\frac{T}{2}, T]$ on battlefield $j$ with positive probability, otherwise player $i'$'s utility is not maximized by bidding in $[\frac{T}{2} + \frac{T}{k}, T]$.
    
    We claim there exists a player $\hat i \in \{i', i''\}$ with $u_{\hat i, j}(2w, F_{-\hat i, j, k}) < \frac{v_{\hat i, j}}{2^{2n}}$. 
    If this is not true, combining $u_{\hat i, j}(2w, F_{-\hat i, j, k}) \ge \frac{v_{\hat i, j}}{2^{2n}}$ with (i) $u_{\hat i, j}(x, F_{-\hat i, j, k}) \le a_{\hat i} x + b_{\hat i, j}$ for all $x$, (ii) $u_{\hat i, j}(x, F_{-\hat i, j, k}) = a_{\hat i} x + b_{\hat i, j}$ for all $x$ that $\hat i$ bids with positive probability, and (iii) that $\hat i$ bids in $[T/2, T]$ with positive probability, gives
    \begin{align*}
        & a_{\hat i} \cdot T/2 + b_{\hat i, j} \le {v_{\hat i, j}}, \\
        & a_{\hat i} \cdot 2w + b_{\hat i, j} \ge \frac{v_{\hat i, j}}{2^{2n}}. \\
        \Rightarrow & b_{\hat i, j} \ge \frac{v_{\hat i, j}}{2^{2n}} - \frac{2w}{T/2-2w} \cdot ({v_{\hat i, j}}-\frac{v_{\hat i, j}}{2^{2n}}) = \frac{v_{\hat i, j}}{2^{2n+1}}.
    \end{align*}
    Therefore $b_{\hat i, j}$ is lower bounded by a constant that does not depend on $k$.
    There are two cases: (i) At least one of $i'$ and $i''$ bids 0 on battlefield $j$ with positive probability. 
    In this case, for this player, randomly bidding 0 and in $[T/2, T]$ is worse than bidding $T/k$, for any sufficiently large $k$. 
    (ii) Neither $i'$ nor $i''$ bids 0 on battlefield $j$ with positive probability. One can easily show that there exists some $c_k > 0$ s.t. for every player, the minimum nonzero bid that it chooses for battlefield $j$ with positive probability is equal to $\frac{c_kT}{k}$.
    By $u_{i, j}(0, F_{-i, j, k}) \ge 0$ and $u_{i,j}(T, F_{-i, j, k}) \ge \frac{v_{i, j}}n$, we have $u_{i,j}(\frac{c_kT}{k}) \ge \frac{c_k v_{i, j}}{nk}$ for every $i \in [n]$.

    For every $\hat i \in \{i', i''\}$, by $b_{\hat i, j} \ge \frac{v_{\hat i, j}}{2^{2n+1}}$ and $u_{\hat i,j}(T, F_{-\hat i, j, k}) \ge \frac{v_{\hat i, j}}n$ we have
    \begin{align*}
        u_{\hat i,j}(\frac{c_kT}{k}) \ge \frac{T-\frac{c_kT}{k}}{T} \cdot \frac{v_{\hat i, j}}{2^{2n+1}} + \frac{\frac{c_kT}{k}}{T} \cdot \frac{v_{\hat i, j}}n = (\frac{k-c_k}{2^{2n+1}k} + \frac{c_k}{nk}){v_{\hat i, j}}.
    \end{align*}
    Note that since $i'$ and $i''$ never bid smaller than $\frac{c_kT}{k}$, we have $u_{\hat i, j}(x, F_{-\hat i, j, k}) = 0$ for all $x < \frac{c_kT}{k}$. 
    $i'$ and $i''$ can only win at $\frac{c_kT}{k}$ by winning a tie-breaking, therefore by deviating to $\frac{(c_k+1)T}{k}$, the winning probability at least doubles. 
    This gives
    \begin{align*}
         a_{\hat i} \ge \frac{\frac12 \cdot (\frac{k-c_k}{2^{2n+1}k} + \frac{c_k}{nk}){v_{\hat i, j}}}{\frac{T}{k}} = \frac{\frac{k-c_k}{2^{2n+1}} + \frac{c_k}{n}}{2T} \cdot {v_{\hat i, j}} \ge \frac{k}{2^{2n+2}T} \cdot {v_{\hat i, j}}.
    \end{align*}
    That is, $a_{\hat i}$ becomes unbounded as $k$ increases, which gives unbounded value of $u_{\hat i, j'}$ for some $j' \in [m]$, a contradiction.

    Therefore, our initial claim is true, that is, there exists some player $\hat i \in \{i', i''\}$ with $u_{\hat i, j}(2w, F_{-\hat i, j, k}) < \frac{v_{\hat i, j}}{2^{2n}}$. 
    However, this means there exists some $i^* \ne \hat i$ who bids smaller than $2w$ with probability less than $\frac12$. 
    Then $i^*$'s expected bid is larger than $w$, which violates its budget constraint since $w \ge B_{i^*}$. We conclude that no player bids in $[T/2+w, T]$ with positive probability.
\end{proof}

\subsection{Proof of Lemma \ref{lem-uniform-converge}}
In order to prove Lemma \ref{lem-uniform-converge}, we first state Lemma \ref{lem-zero-cont} and Lemma \ref{lem-continuous}, which will be used in the proof of Lemma \ref{lem-uniform-converge}.

\begin{lemma}\label{lem-zero-cont}
    Given $n \ge 3$, for any limit $\tilde F_{\cdot, \cdot}$ of the subsequence $(\tilde F_{\cdot, \cdot, k})_{k \ge 1}$, any $i \in [n]$ and any $j \in [m]$, we have $\tilde F_{i,j}(0) = \tilde F_{i,j}(0^+)$.
\end{lemma}
\begin{proof}
    Suppose there exists $i,j$ s.t. $\tilde F_{i,j}(0) \ne \tilde F_{i,j}(0^+)$. 
    By Helly's selection theorem $\tilde F$ is non-decreasing, therefore $\tilde F_{i,j}(0^+) > \tilde F_{i,j}(0)$. Denote $\delta \mathrel{\mathop:}= \frac{1}{3}(\tilde F_{i,j}(0^+) - \tilde F_{i,j}(0))$. 
    Take a sequence $(\epsilon_d)_{d \ge 1}$ in $(0, T)$ that decreases to zero, for every $d \ge 1$ there exists $K_d > 0$ s.t. for every $k > K_d$, we have 
    \[
        \tilde F_{i,j,k}(\epsilon_d) \ge \tilde F_{i,j}(\epsilon_d) - \delta \ge \tilde F_{i,j}(0^+) - \delta, 
    \]
    and
    \[
        \tilde F_{i,j,k}(0) \le \tilde F_{i,j}(0) + \delta, 
    \]
    therefore 
    \[
        \tilde F_{i,j,k}(\epsilon_d) - \tilde F_{i,j,k}(0) \ge \delta.
    \]
    That is, with probability no less than $\delta$, player $i$'s bid on battlefield $j$ is in $(0, \epsilon_d]$. Denote by $w$ and $W$ the minimum and maximum positive bid within $(0, \epsilon_d]$ that player $i$ chooses on battlefield $j$, so that $\tilde F_{i,j,k}(W) = \tilde F_{i,j,k}(\epsilon_d)$. 
    We claim that every $i' \ne i$ should bid less than $w$ with positive probability: Otherwise, instead of bidding $w$, $i$ can either bid $0$ without decreasing its utility (if some $i'$ always bids larger than $w$), or bid $w + \frac{T}{k}$ so that its winning probability $w + \frac{T}{k}$ more than doubles (if every $i' \ne i$ bids $w$ with positive probability). 
    In both case such deviation of $i$ can be coupled with a larger or smaller bid so that $i$'s expected utility increases.

    When bidding less than $w$, player $i' \ne i$'s winning probability is upper bounded by $\tilde F_{i,j,k}(0) \cdot \Pi_{i'' \ne i, i'} \tilde F_{i'',j,k}(W)$.
    When bidding $W+\frac{T}{k}$, its winning probability is lower bounded by $\tilde F_{i,j,k}(W) \cdot \Pi_{i'' \ne i, i'} \tilde F_{i'',j,k}(W)$. 
    Therefore, the constant $a_{i'}$ given in Lemma \ref{lem-linear} is lower bounded:
    \begin{align*}
        a_{i'} & \ge \frac{(\tilde F_{i,j,k}(W) - \tilde F_{i,j,k}(0)) \cdot \Pi_{i'' \ne i, i'} \tilde F_{i'',j,k}(W)}{W+\frac{T}{k}} \cdot v_{i', j} \\
        & \ge \frac{\delta \cdot \Pi_{i'' \ne i, i'} \tilde F_{i'',j,k}(W)}{W+\frac{T}{k}} \cdot v_{i', j} \\
        & \ge \frac{\delta \cdot \Pi_{i'' \ne i, i'} \tilde F_{i'',j,k}(W)}{2W} \cdot v_{i', j}.
    \end{align*}
    Note that with positive probability, $i'$ bids no less than $\frac{B_{i'}}{m}$ on some battlefield $j'$.
    Therefore, we have
    \begin{align*}
        a_{i'} \cdot \frac{B_{i'}}{m} \le v_{i', j} \Rightarrow \Pi_{i'' \ne i, i'} \tilde F_{i'',j,k}(W) \le \frac{2mW}{\delta \cdot B_{i'}}.
    \end{align*}
    The above inequality holds for every $i' \ne i$, which further gives
    \begin{align*}
        & \Pi_{i' \ne i} \Pi_{i'' \ne i, i'} \tilde F_{i'',j,k}(W) \le (\frac{2mW}{\delta})^{n-1}\cdot \frac{1}{\Pi_{i' \ne i} B_{i'}}\\
        \Rightarrow & \Pi_{i'' \ne i} \tilde F_{i'',j,k}(W) \le (\frac{2mW}{\delta})^{\frac{n-1}{n-2}} \cdot (\Pi_{i' \ne i} B_{i'})^{-\frac{1}{n-2}}.
    \end{align*}
    By Lemma \ref{lem-linear}, we have
    \begin{align*}
        &a_iW + b_{ij} \le (\frac{2mW}{\delta})^{\frac{n-1}{n-2}} \cdot (\Pi_{i' \ne i} B_{i'})^{-\frac{1}{n-2}}  \cdot v_{i', j}\\
        \Rightarrow & a_i \le (\frac{2m}{\delta})^{\frac{n-1}{n-2}} \cdot (\Pi_{i' \ne i} B_{i'})^{-\frac{1}{n-2}} \cdot v_{i', j} \cdot W^{\frac{1}{n-2}}.
    \end{align*}
    Therefore, $a_i$ goes to zero as $d$ goes to infinity and $W \le \epsilon_d$ goes to zero, which means for sufficiently large $k$, the marginal contribution to $i$'s winning probability as it increases its bid goes to zero on every battlefield. 
    This is true only if on every battlefield, $i$ is almost always among the highest bidders, which easily leads to a contradiction since other players can beat $i$ on battlefield $j$.
\end{proof}

\begin{lemma} \label{lem-continuous}
    For any limit $\tilde F_{\cdot, \cdot}$ of the subsequence $(\tilde F_{\cdot, \cdot, k})_{k \ge 1}$, any $i \in [n]$ and any $j \in [m]$, $\tilde F_{i,j}$ is continuous on $(0, T)$.
\end{lemma}
\begin{proof}
    Note that every $\tilde F_{i,j}$ is non-decreasing, therefore its left and right limit both exist at every $b \in (0, T)$. 
    The lemma is true if the following two claims both hold:
    \begin{enumerate}
        \item There does not exist $j \in [m]$ and  $b \in (0, T)$ s.t. for some $i, i' \in [n]$ with $i \ne i'$, $\tilde F_{i,j}(b^+) > \tilde F_{i,j}(b^-)$ and $\tilde F_{i',j}(b^+) > \tilde F_{i',j}(b^-)$;
        \item There does not exist $j \in [m]$ and $b \in (0, T)$ s.t. for some $i \in [n]$, $\tilde F_{i,j}(b^+) > \tilde F_{i,j}(b^-)$ and $\tilde F_{i',j}(b^+) = \tilde F_{i',j}(b^-)$ for all $i' \ne i$;
    \end{enumerate}
    We give proof for the two claims separately.
    \begin{enumerate}
            \item Suppose there exists such $(\tilde S, i, i', j, b)$. 
            Denote
            \begin{align*}
                \delta_1 \mathrel{\mathop:}= \frac{\tilde F_{i,j}(b^+) - \tilde F_{i,j}(b^-)}3,  \quad \delta_2 \mathrel{\mathop:}= \frac{\tilde F_{i',j}(b^+) - \tilde F_{i',j}(b^-)}3.
            \end{align*}
            Take a decreasing sequence $(\epsilon_d)_{d \ge 1}$ taking values in $(0, \min($ $b, T-b))$ that goes to zero. 
            For every $d \ge 1$, by monotonicity of $F$ and pointwise convergence of $(\tilde F_{\cdot, \cdot, k})$, there exists $K_{d,0} > 0$ s.t. for any $k \ge K_{d,0}$,
            \[
                \tilde F_{i,j,k}(b - \epsilon_d) \le \tilde F_{i,j}(b - \epsilon_d) + \delta_1 \le \tilde F_{i,j}(b^+) + \delta_1.
            \]
            Similarly, for every $d \ge 1$, there exists $K_{d,1} > 0$ s.t. for any $k \ge K_{d,1}$,
            \[
                \tilde F_{i,j,k}(b + \epsilon_d) \ge \tilde F_{i,j}(b + \epsilon_d) - \delta_1 \ge \tilde F_{i,j}(b^+) - \delta_1.
            \]
            Consider the sequence
            $$(K_d)_{d \ge 1} \mathrel{\mathop:}= (\max\{K_{d,0},K_{d,1}\})_{d \ge D},$$ 
            which satisfies, for any $k \ge K_d$,
            \begin{align*}
                \tilde F_{i,j,k}(b + \epsilon_d) - \tilde F_{i,j,k}(b - \epsilon_d) > \tilde F_{i,j}(b^+) - \delta_1 - \tilde F_{i,j}(b^-) - \delta_1 > \delta_1.
            \end{align*}
            Similarly, for player $i'$, we obtain a sequence $(K_d')_{d \ge 1}$ such that for any $k \ge K_d',$
            \[
                \tilde F_{i',j,k}(b + \epsilon_d) - \tilde F_{i',j,k}(b - \epsilon_d) > \delta_2.
            \]
            Consider the sequence
            $$(\tilde K_d)_{d \ge 1} \mathrel{\mathop:}= (\max\{K_d, K'_d\})_{d \ge 1}.$$
            For any $\epsilon > 0$, there exists $d > 0$ s.t. $\epsilon_{d} < \epsilon$, and for any $k > \tilde K_{d}$,
            \begin{align*}
                    & \tilde F_{i, j, k}(b+\epsilon) - \tilde F_{i, j, k}(b - \epsilon) \ge \tilde F_{i,j,k}(b+\epsilon_d) - \tilde F_{i,j,k}(b - \epsilon_{d}) > \delta_1, \\
                    & \tilde F_{i', j,k}(b+\epsilon) - \tilde F_{i', j,k}(b - \epsilon) \ge \tilde F_{i',j,k}(b+\epsilon_d) - \tilde F_{i',j,k}(b - \epsilon_{d}) > \delta_2.
            \end{align*}
        \begin{enumerate}
            \item For any $\epsilon < min(b,T-b)$ and any corresponding $\epsilon_d < \epsilon$, $k>\tilde K_d$, suppose there exists some player who bids larger than $b+\epsilon$ on battlefield $j$ w.p. $1$ by strategy profile $\tilde F_{\cdot,\cdot,k}$. 
            Then, $i$ and $i'$ both bid within range $(b-\epsilon, b+\epsilon]$ with positive probability, but never wins. 
            This is strictly worse than randomly bidding $0$ and $T$, a contradiction with $\tilde F_{\cdot, \cdot, k}$ being an equilibrium of $\mathcal{G}_k$.
            \item Therefore for any $\epsilon > 0$ and any correspondingly large $k$, we have $\tilde F_{i'', j,k}(b+\epsilon) > 0, \forall i'' \in [n]$. 
            Denote by $\mathcal{E}$ the event that both player $i$ and $i'$ bid within range $(b-\epsilon, b+\epsilon]$, and all other players bid at most $b+\epsilon$, we have $\Pr[\mathcal{E}] \ge \delta_1 \cdot \delta_2 \cdot \Pi_{i'' \ne i, i'}\tilde F_{i'', j,k}(b+\epsilon)$. 
            Moreover, denote by $\mathcal{E}_i$ and $\mathcal{E}_{i'}$ the subset of $\mathcal{E}$ that $i$ and $i'$ separately wins. 
            By pigeonhole principle, w.l.o.g. we can suppose 
            \[
                \Pr[\mathcal{E}_i] \le \frac12 \Pr[\mathcal{E}].
            \]
            Moreover, we have
            \begin{align*}
                \Pr[i \text{ wins by bidding in }(b-\epsilon,b+\epsilon] \le \Pi_{i'' \ne i, i'}\tilde F_{i'', j,k}(b+\epsilon).
            \end{align*}

            For sufficiently large $k$, we have $A_k \cap (b+\epsilon, b+2\epsilon) \ne \emptyset$. 
            In such a game $\mathcal{G}_k$, whenever $i$ used to bid within $(b - \epsilon, b+\epsilon]$ according to $\tilde F_{i, j,k}$, it can bid within range $(b+\epsilon, b+2\epsilon)$ w.p. at least $\frac{b-\epsilon}{b+2\epsilon}$, and bid $0$ otherwise. 
            Player $i$'s expected bid does not increase, while the change in probability of winning is lower bounded by
            \begin{align*}
                & \frac{b-\epsilon}{b+2\epsilon} \cdot \left( \Pr \big[ i \text{ wins by bidding in }(b-\epsilon,b+\epsilon] \big]
                + \Pr[\mathcal{E} \backslash \mathcal{E}_i] \right) \\ & - \Pr \big[ i \text{ wins by bidding in }(b-\epsilon,b+\epsilon] \big] \\
                \ge & \frac{b-\epsilon}{b+2\epsilon} \cdot \frac12 \delta_1 \cdot \delta_2 \cdot \Pi_{i'' \ne i, i'}\tilde F_{i'',j, k}(b+\epsilon) - \frac{3\epsilon}{b+2\epsilon} \Pi_{i'' \ne i, i'}\tilde F_{i'',j, k}(b+\epsilon).
            \end{align*}
            By taking a sufficiently small $\epsilon$, we can make the above value positive, which contradicts with $\tilde F_{i,\cdot, k}$ being a best response.
        \end{enumerate}
        \item Suppose there exists such $(\tilde F, i, j, b)$. 
        Denote $\delta = \frac{\tilde F_{i}(b^+) - \tilde F_{i}(b^-)}3$.
        For any $\xi > 0, e > 0$ there exists $\epsilon \in (0, e)$ and $K> 0$ s.t. for any $k > K$,
        \begin{align*}
            \tilde F_{i, j, k}(b+\epsilon) - \tilde F_{i, j, k}(b - \epsilon) > \delta, \quad \tilde F_{i', j, k}(b+\epsilon) - \tilde F_{i', j, k}(b - \epsilon) < \xi, ~ \forall i' \ne i.
        \end{align*}
        Obviously we have $\tilde F_{i', j, k}(b+\epsilon) > 0$ for all $i' \ne i$, otherwise $\tilde F_{i, \cdot, k}$ is not $i$'s best response.
        Consider two choices of some player $i' \ne i$ on battlefield $j$:
        \begin{enumerate}
            \item Bid $b - \epsilon - c$ for some $c \ge 0$. 
            \item Bid $b + \epsilon'$ for some $\epsilon' \in (\epsilon, 2\epsilon] \cap A_k$ w.p. $\frac{b - \epsilon - c}{b+2\epsilon}$, and bid 0 otherwise.
        \end{enumerate}
    
        The difference in expected utility between these choices is 
        \begin{align*}
            \Delta_P = \; & u_{i',j}(b - \epsilon - c, \tilde F_{-i', j, k}) - \frac{3\epsilon + c}{b+2\epsilon} \cdot u_{i'}(0, \tilde F_{-i', j, k}) \\
            &- \frac{b - \epsilon - c}{b+2\epsilon} \cdot u_{i',j}(b + \epsilon', \tilde F_{-i', j, k}) \\
            \le \; & \tilde F_{i, j, k}(b-\epsilon) \cdot \prod_{i'' \ne i, i'} \tilde F_{i'', j, k}(b+\epsilon) - \frac{b - \epsilon - c}{b+2\epsilon} \\
            &\cdot \big[ \tilde F_{i, j, k}(b-\epsilon) + \delta \big] \cdot \prod_{i'' \ne i,i'} \tilde F_{i'', j, k}(b+\epsilon) \\
            = \; & \Big[\frac{3\epsilon + c}{b + 2\epsilon} \tilde F_{i, j, k}(b-\epsilon) - \frac{b - \epsilon - c}{b+2\epsilon} \cdot \delta \Big] \cdot \prod_{i'' \ne i,i'} \tilde F_{i'', j, k}(b+\epsilon) \\
            < \; & \Big[ \frac{3\epsilon + c}{b + 2\epsilon}(1 - \delta) - \frac{b - \epsilon - c}{b+2\epsilon} \cdot \delta \Big] \cdot \prod_{i'' \ne i,i'} \tilde F_{i'', j, k}(b+\epsilon)
        \end{align*}
    
        We have $\Delta_P < 0$ if $c \le \delta (b + 2\epsilon) - 3\epsilon$. 
        Therefore, for any $c \le \delta b - 3\epsilon < \delta (b + 2\epsilon) - 3\epsilon$ the first option is strictly worse for player $i' \ne i$, and the second option has weakly smaller bid expectation. 
        Since $\tilde F_{\cdot, \cdot, k}$ is an equilibrium, no player $i' \ne i$ bids within range $[(1-\delta) b + 2\epsilon, b - \epsilon]$ on battlefield $j$.
        This gives 
        \begin{equation}\label{eq-c-interval}
            \tilde F_{i', j, k}((1-\delta)b + 2\epsilon) = \tilde F_{i', j, k}(b-\epsilon), \forall i' \ne i.
        \end{equation}
        
        Now consider two choices of player $i$:
        \begin{enumerate}
            \item Bid within $((1-\delta)b + 2\epsilon, (1-\delta)b + 3\epsilon] \cap A_k$ w.p. $\frac{T-b+\epsilon}{T-(1-\delta)b-3\epsilon}$, and bid $T$ w.p. $\frac{\delta b - 4\epsilon}{T-(1-\delta)b-3\epsilon}$.
            \item Bid within range $(b-\epsilon, b+\epsilon] \cap A_k$.
        \end{enumerate}

        Note that player $i$ takes the second choice with positive probability according to $\tilde F_{i,\cdot,k}$. 
        The first choice has smaller bid expectation, therefore its  winning probability has to be weakly smaller. 
        This is true only if:
        \begin{align*}
            & \frac{T-b+\epsilon}{T-(1-\delta)b-3\epsilon} \cdot u_{i, j} \left( (1-\delta) b + 2\epsilon, \tilde F_{-i, j, k} \right) \\
            & + \frac{\delta b - 4\epsilon}{T-(1-\delta)b-3\epsilon} \cdot u_{i,j}(T, \tilde F_{-i, j, k}) \\
            \le \; & u_{i,j}(b, \tilde F_{-i, j, k}) / v_{ij} \\
            \Rightarrow & \frac{T-b+\epsilon}{T-(1-\delta)b-3\epsilon} \cdot \Pi_{i' \ne i} \tilde F_{i', j, k} \left( (1-\delta) b + 2\epsilon \right) \\
            & + \frac{\delta b - 4\epsilon}{T-(1-\delta)b-3\epsilon} \cdot u_{i, j}(T, \tilde F_{-i, j, k}) \\
            \le \; & \Pi_{i' \ne i} \tilde F_{i',j,k}(b+\epsilon).
        \end{align*}
        We have already obtained
        \[
            \tilde F_{i',j,k}(b+\epsilon) < \tilde F_{i',j,k}(b-\epsilon) + \xi, \forall i' \ne i.
        \]
        And recall equation (\ref{eq-c-interval}), we have:
        \begin{align*}
            & \Pi_{i' \ne i} \tilde F_{i',j,k}(b+\epsilon) \\
            <~ & \Pi_{i' \ne i} (\tilde F_{i',j,k}((1-\delta) b + 2\epsilon) + \xi) \\
            <~ & \Pi_{i' \ne i} \tilde F_{i',j,k}((1-\delta) b + 2\epsilon) + n\xi \\
            \Rightarrow~ & \frac{\delta b - 4\epsilon}{T-(1-\delta)b-3\epsilon} \cdot [u_{i, j}(T, \tilde F_{-i, j, k}) - \Pi_{i' \ne i} \tilde F_{i', j, k}((1-\delta) b + 2\epsilon)]
            <~ n\xi 
        \end{align*}
        Considering any tie of $i$ and some $i' \ne i$ at $T$, and any case where some $i' \ne i$ bids within $((1-\delta) b + 2\epsilon ,T)$, we have
        \begin{align*}
            n \xi >~ & \frac{\delta b - 4\epsilon}{T-(1-\delta)b-3\epsilon} \cdot \Big(\Pr[\max_{i' \ne i}X_{i',j} \in ((1-\delta) b + 2\epsilon, T)] \\
            & + \sum_{2 \le l \le n} \frac1l \Pr[l-1 \text{ players other than } i \text{ bid }T] \Big) \\
            \ge~ & \frac{\delta b - 4\epsilon}{T-(1-\delta)b-3\epsilon} \cdot \frac1n \cdot \Pr \Big[\max_{i' \ne i} X_{i',j} \in ((1-\delta) b + 2\epsilon, T] \Big]
        \end{align*}
        for any $\tilde F_k$ with sufficiently large $k$. 
        Since we can take arbitrarily small $\xi$, the probability that any $i' \ne i$ bidding larger than $(1-\delta)b + 2\epsilon$ can be arbitrarily small given sufficiently large $k$. 
        In this case, however, $i$'s winning probability should be arbitrarily close to $1$ on every battlefield, otherwise it should bid smaller on $j$ to gain higher utility on some other battlefield, which leads to contradiction.
    \end{enumerate}
\end{proof}

Now we are ready to start proving Lemma \ref{lem-uniform-converge}.
\begin{proof}
    \begin{enumerate}
        \item By Lemma \ref{lem-zero-cont} and \ref{lem-continuous}, we know that every $\tilde F_{i,j}$ is continuous at $[0, T)$. 
        Moreover, by Lemma \ref{lem-bounded-eq}, we have $\tilde F_{i,j,k}(\frac{3}{4}T) = 1$ for any sufficiently large $k$, which gives $\tilde F_{i,j}(\frac{3}{4}T) = 1$. 
        Helly's selection theorem tells that $\tilde F_{i,j}$ is non-decreasing, so that $\tilde F_{i,j}(x) = 1$ for all $x \in [\frac{3}{4}T, T]$. 
        Then we have that $\tilde F_{i,j}$ is continuous on the closed interval $[0,T]$.
        By Heine-Cantor theorem, we can conclude that $\tilde F_{i,j}$ is uniformly continuous.
        \item We have obtained that $\tilde F_{i,j}$ is continuous. 
        Then, $(\tilde F_{i,j,k})_{k \ge 1}$ is a sequence of non-decreasing functions defined on closed interval $[0,T]$ that pointwise converges to a continuous limit $\tilde F_{i,j}$, therefore the convergence is uniform (see e.g. \cite{uniform-converge}).
        \item First, note that for every $x \in [0,T]$ and set of c.d.f. $F_{-i,j}$, the value of $u_{i,j}(x, F_{-i, j})$ only depends on $F_{i',j}(x)$ and $F_{i', j}(x^-)$ for every $i' \ne i$. 
        Moreover, these $2n-2$ values are in $[0,1]$, and one can easily show that for every $x$, the value of $u_{i,j}(x, F_{-i, j})$ is Lipschitz continuous in all these values.

        For any $i \in [n]$, $j \in [m]$ and $\epsilon > 0$, by uniform convergence there exists $K > 0$ s.t. for any $k > K$, we have $|\tilde F_{i', j, k}(x) - \tilde F_{i',j}(x)| < \epsilon$ for every $i' \ne i$ and every $x \in [0, T]$. 
        By uniform continuity of $\tilde F_{i',j}$ there exists $\delta > 0$ s.t. $\tilde F_{i', j}(x-\delta) \ge \tilde F_{i', j}(x^-) - \epsilon$ for every $i' \ne i$, every $j \in [m]$ and every $x \in [0, T]$. 
        Therefore, for any $k \ge K$, we have
        \begin{align*}
            \tilde F_{i', j}(x^-) - 2\epsilon & \le \tilde F_{i', j}(x-\delta) - \epsilon  \le \tilde F_{i', j, k}(x-\delta) \\
            & \le \tilde F_{i', j, k}(x^-) \le \tilde F_{i', j, k}(x) \\
            & \le \tilde F_{i', j}(x) + \epsilon \\
            & = \tilde F_{i', j}(x^-) + \epsilon,
        \end{align*}
        and 
        \[
            \tilde F_{i', j}(x) - \epsilon \le \tilde F_{i', j, k}(x) \le \tilde F_{i', j}(x) + \epsilon,
        \]
        for every $i' \ne i$ and every $x \in [0,T]$. 
        By Lipschitz continuity of $u_{i,j}$, one can easily obtain the convergence of $u_{i,j}$ as a function of $x$ in infinite norm.
    \end{enumerate}
\end{proof}

\subsection{Proof of Theorem \ref{thm-mbeq-exist}}
\begin{proof}
     We have obtained the sequence of Nash equilibria $(\tilde F_{\cdot,\cdot,k})_{k \ge 1}$ of the sequence of games $(\mathcal{G}_k)_{k \ge 1}$, which converges to a limit $\tilde F_{\cdot,\cdot}$. 
    We claim that $\tilde F_{\cdot,\cdot}$ is a Nash equilibrium of the original GL game with continuous and unbounded bid space.

    First we show that $\tilde F_{\cdot,\cdot}$ is a Nash equilibrium if the bid space of each player is continuous, but constrained to be $[0, T]$. 
    Suppose $\tilde F_{\cdot,\cdot}$ is not a Nash equilibrium, then there exists $i \in [n]$ who can deviate to another strategy $F_{i,\cdot} = (F_{i, j})_{j \in [m]}$, so that the difference in expected utility is positive:
    \begin{align*}
        \delta =& \sum_{j \in [m]} \mathbb{E}_{X_{i,j} \sim F_{i,j}}[u_{i,j}(X_{i,j}, \tilde F_{-i, j})] - \sum_{j \in [m]} \mathbb{E}_{X_{i,j} \sim \tilde F_{i,j}}[u_{i,j}(X_{i,j}, \tilde F_{-i, j})] \\
        >&~ 0.
    \end{align*}
    By lemma \ref{lem-uniform-converge}, there exists $K > 0$ s.t. for any $k > K$, we have
    $\sup_{x \in [0,T]}|u_{i, j}(x, \tilde F_{-i,j,k}) - u_{i,j}(x, \tilde F_{-i, j})| < \frac{\delta}{3}$ for every $x \in [0, T]$ and every $j \in [m]$. Therefore, we have
    \begin{align*}
        & \sum_{j \in [m]} \mathbb{E}_{X_{i,j} \sim F_{i,j}}[u_{i,j}(X_{i,j}, \tilde F_{-i, j, k})] - \sum_{j \in [m]} \mathbb{E}_{X_{i,j} \sim \tilde F_{i,j}}[u_{i,j}(X_{i,j}, \tilde F_{-i, j, k})] \\
        > & \sum_{j \in [m]} \mathbb{E}_{X_{i,j} \sim F_{i,j}}[u_{i,j}(X_{i,j}, \tilde F_{-i, j}) -  \frac{\delta}{3}] - \sum_{j \in [m]} \mathbb{E}_{X_{i,j} \sim \tilde F_{i,j}}[u_{i,j}(X_{i,j}, \tilde F_{-i, j}) + \frac{\delta}{3}] \\
        = & ~ \frac{\delta}{3}.
    \end{align*}
    That is, when other players bid according to $\tilde F_{-i, \cdot, k}$, bidding according to $F_{i,\cdot}$ gives player $i$ higher utility than bidding according to $\tilde F_{i,\cdot}$. 
    Note that $u_{i,j}(\cdot, \tilde F_{-i, j, k})$ is a piecewise constant function, therefore we can modify every $X_{i,j} \sim F_{i,j}$ into $X_{i,j}' \mathrel{\mathop:}= \lceil \frac{X_{i,j} \cdot k}{T} \rceil \cdot \frac{T}{k}$, so that (i) $x_{i,j}$' only takes value in $A_k$, (ii) $i$'s expected utility does not decrease, while (iii) its total bid expectation increases by at most $\frac{T}{k}$. 
    We can further construct $X_{i,j}'' = X_{i,j}' \cdot I$ for every $j \in [m]$, where $I$ is a random number independent of $(X_{i,j}')_{j \in [m]}$ that equals $0$ with probability $p$ and equals $1$ with probability $(1-p)$.
    For sufficiently large $k$, we can choose a small enough $p$ so that $\sum_{j \in [m]} \mathbb{E}[X_{i,j}''] \le B_i$, and
    \begin{align*}
        & \sum_{j \in [m]} \mathbb{E}[u(X_{i,j}'', \tilde F_{-i, j, k})] \\
        \ge & ~ (1-p) \cdot \sum_{j \in [m]} \mathbb{E}_{X_{i,j} \sim F_{i,j}}[u_{i,j}(X_{i,j}, \tilde F_{-i, j, k})] \\
        \ge & ~ \sum_{j \in [m]} \mathbb{E}_{X_{i,j} \sim F_{i,j}}[u_{i,j}(X_{i,j}, \tilde F_{-i, j, k})] - \frac{\delta}{6}.
    \end{align*}
    Lemma \ref{lem-uniform-converge} tells that $(\tilde F_{i, j, k})_{k \ge 1}$ uniformly converges to $\tilde F_{i, j}$ for every $j \in [m]$. 
    Moreover, we have
    \begin{align*}
        & \mathbb{E}_{X_{i,j} \sim \tilde F_{i,j}}[u_{i,j}(X_{i,j}, \tilde F_{-i, j, k})] \\ 
        = &~ \int_{0}^{T} (1-\tilde F_{i,j}(x)) du_{i,j}(x, \tilde F_{-i, j, k}) \\
        = &~ u_{i,j}(T, \tilde F_{-i, j, k}) - u_{i,j}(0, \tilde F_{-i, j, k}) - \int_{0}^{T} \tilde F_{i,j}(x) du_{i,j}(x, \tilde F_{-i, j, k}).
    \end{align*}
    For every distribution profile $F_{-i, j, k}$, $u_{i,j}(\cdot, F_{-i, j, k})$ is non-decreasing and takes value in $[0, v_{i,j}]$. 
    By taking some sufficiently large $k$, we can let $\tilde F_{i, j, k}$ be arbitrarily close to $\tilde F_{i, j}$ for every $x \in [0, T]$ and every $j \in [m]$ simultaneously, so that
    \begin{align*}
        & \sum_{j \in [m]} \mathbb{E}_{X_{i,j} \sim \tilde F_{i,j,k}}[u_{i,j}(X_{i,j}, \tilde F_{-i, j, k})] \\
        = & \sum_{j \in [m]} \mathbb{E}_{X_{i,j} \sim \tilde F_{i,j}}[u_{i,j}(X_{i,j}, \tilde F_{-i, j, k})] + \sum_{j \in [m]} \int_0^T (\tilde F_{i,j,k}(x) - \tilde F_{i,j}(x)) du_{i,j}(x, \tilde F_{-i, j, k}) \\
        \le & \sum_{j \in [m]} \mathbb{E}_{X_{i,j} \sim \tilde F_{i,j}}[u_{i,j}(X_{i,j}, \tilde F_{-i, j, k})] + \frac{\delta}{6} \\
        < & \sum_{j \in [m]} \mathbb{E}_{X_{i,j} \sim F_{i,j}}[u_{i,j}(X_{i,j}, \tilde F_{-i, j, k})] - \frac{\delta}{6}.
    \end{align*}

    So far, we have identified a deviation characterized by random bids $(X_{i,j}'')_{j \in [m]}$ in $\mathcal{G}_k$, which yields higher expected utility than $\tilde F_{i,\cdot,k}$.
    This contradicts with $\tilde F_{\cdot,\cdot,k}$ being a Nash equilibrium of $\mathcal{G}_k$. 
    We conclude that deviation $F_{i,\cdot}$ in the game with continuous bid space $[0, T]$ should not exist.

    Therefore, $\tilde F_{\cdot,\cdot}$ is a Nash equilibrium of the game with bid space $[0,T]$. 
    However, by Lemma \ref{lem-bounded-eq} and the pointwise convergence of $\tilde F_{\cdot,\cdot,k}$ to $\tilde F_{\cdot,\cdot}$, no player bid larger than $\frac{3}{4}T$ in $\tilde F_{\cdot,\cdot}$. 
    Therefore, even if a player is allowed to bid larger than $T$, any deviation bidding larger than $T$ with positive probability does not induce higher expected utility than another deviation that only bids in $[0,T]$. 
    We conclude that $\tilde F_{\cdot,\cdot}$ is a Nash equilibrium of the original GL game.
\end{proof}

\section{Proofs for Section 4}\label{appendix-sec-4}
\subsection{Proof of Lemma \ref{distribution}}
\begin{proof}
    We prove it by contradiction.
    Define $\Gamma_x = \{i':F_{i'}(x) \neq F_{i'}(x^-)\}$ and suppose $|\Gamma_x| \geq 2$ for some $x > 0$.
    Take an arbitrary $i \in \Gamma_x$, we have $x \in Supp_i$.
    By Lemma \ref{lem-cont-linear}, $a_i(x + \varepsilon) + b_i \geq u_i(x + \varepsilon, F_{-i})$ holds for every $\varepsilon > 0$.
    By taking $\lim_{\varepsilon \rightarrow 0}$, we obtain following inequality
    \begin{align*}
        a_ix + b_i \geq u_i(x^+, F_{-i}) = \prod_{i' \neq i}F_{i'}(x) \geq u_i(x^-, F_{-i}) + c = a_ix + b_i + c,
    \end{align*}
    where $c = \frac{\prod_{i' \in \Gamma_x}(F_{i'}(x) - F_{i'}(x^-))}{|\Gamma_x|} \cdot \prod_{i'' \notin \Gamma_x}F_{i''}(x)$.
    To satisfy the inequality there should be $c =0$. 
    However, it means $\prod_{i'' \notin \Gamma_x}F_{i''}(x) = 0$ by definition of $\Gamma_x$, therefore $u_i(x^+, F_{-i}) = \prod_{i'' \neq i}F_{i''}(x) = 0$.
    Note that we have $u_i(x, F_{-i}) = a_ix+b_i \le u_i(x^+, F_{-i})$, then by $x>0$ we obtain $a_i = b_i = 0$, which means player $i$ never wins. This easily leads to a contradiction.
    We conclude that $|\Gamma_x| < 2$ for every $x > 0$.
\end{proof}

\subsection{Proof of Lemma \ref{closed set of supprot}}
\begin{proof}
    For (1), we prove it by contradiction.
    $\exists (x_1, x_2)$ such that $\forall i, (x_1, x_2) \cap \overlinex{Supp_i} = \emptyset$ and $x_2 \leq sup \bigcup_i Supp_i$.
    We make $x_2$ as large as possible satisfying $x_2 \in \bigcup_i \overlinex{Supp_i}$.
    W.l.o.g., we can assume $x_2 \in \overlinex{Supp_1}$.
    By Lemma \ref{distribution}, at most one player bids $x_2$ with positive probability at $x_2$.
    If such player exists, then the player is denoted by $i'$, otherwise, $i'$ is refered to any player and $x_2 \in \overlinex{Supp_{i'}}$.
    Because of the continuity of the $u_{i'}(x, F_{-{i'}})$ at $x = x_2$, we can get that $\forall \varepsilon > 0$, $[x_2, x_2 + \varepsilon) \cap Supp_{i'} \neq \emptyset$.
    Furthermore, $\exists \hat{x} \in [x_2, x_2 + \varepsilon)$, $u_{i'}(\hat{x}, F_{-{i'}}) = a_{i'} \hat{x} + b_{i'}$.
    When $\hat{x} \rightarrow x_2$, we have $u_{i'}(x_2, F_{-i'}) = a_{i'}x_2 + b_{i'}$.
    On the other hand, $\forall i, (x_1, x_2) \cap \overlinex{Supp_i} = \emptyset$, and $x \in (x_1, x_2], u_{i'}(x, F_{-i'}) = a_{i'}x_2 + b_{i'}$, which contradicts $u_{i'}(x,F_{-i'}) \leq a_{i'}x + b_{i'}$ in Lemma \ref{lem-cont-linear}.

    For (2), we prove it by contradiction.
    Suppose $\exists i, x_1, x_2$ satisfying $(x_1, x_2) \subseteq \overlinex{Supp_i}$ and $\forall i' \neq i,  (x_1, x_2) \cap \overlinex{Supp_{i'}} = \emptyset$, then $u_i(x, F_{-i})$ is constant for $x \in (x_1, x_2)$, which contradicts $u_i(x, F_{-i}) = a_ix + b_i, x \in Supp_i$ in Lemma \ref{lem-cont-linear}.

    For (3), if $x \in \bigcup_i \overlinex{Supp_i}$, then $\exists i'', x \in \overlinex{Supp_{i''}}$.
    According to the (2), $x \in \bigcup_{i' \neq i''} \overlinex{Supp_{i'}}$, then we can have $\exists l \neq i'', x \in \overlinex{Supp_l}$.
    Therefore, $i'', l \in \{i|x \in \overlinex{Supp_i}\}$.
\end{proof}

\subsection{Proof of Lemma \ref{epsilon is supprot}}
\begin{proof}
    Consider positive $x \in \bigcup_i \overlinex{Supp_i}$.
    By Lemma \ref{closed set of supprot}, $\exists i', i''$ satisfying $x \in \overlinex{Supp_{i'}} \cap \overlinex{Supp_{i''}}$.
    One of the following cases holds:
    \begin{itemize}
        \item Case 1: $\forall \delta > 0, \exists x^* \in (x - \delta, x] \cap Supp_{i'}$.
        \item Case 2: $\forall \delta > 0, \exists x^* \in (x, x + \delta) \cap Supp_{i'}$.
    \end{itemize}
    In case 1, we have $u_i(x^*) = a_ix^* + b_i$ and $u_i(x) \geq u_i(x^*) > 0$.
    In case 2, $u_i(x^*) = a_ix^* + b_i$.
    When $\delta \rightarrow 0$, we can get that $u_i(x^+) = a_ix + b_i > 0$ and $u_i(x) \geq \frac{u_i(x^+)}{n} > 0$.
    In the both cases, $u_i(x) > 0$, so $\prod_{i \neq i'}F_i(x) > 0$. Similarly, $\prod_{i \neq i''}F_i(x) > 0$.
    Hence, $\forall i, F_i(x) > 0$.
\end{proof}

\subsection{Proof of Lemma \ref{continuous interval}}
\begin{proof}
    We prove it by contradiction.
    There exists $i, \underline{\alpha}, \overline{\alpha}$ satisfying (1) $\underline{\alpha_i} \in \overline{Supp_i}$, (2) $\underline{\alpha_i} > 0$, (3) $\overline{\alpha_i} \leq L$, and (4) $(\underline{\alpha_i}, \overline{\alpha_i}) \cap \overline{Supp_i} = \emptyset$.
    Note, we don't require $\overline{\alpha_i}$ to be in the $\overlinex{Supp_i}$.
    For $\forall i' \neq i$, one of the following holds:
    \begin{itemize}
        \item Case 1: $(\underline{\alpha_i}, \overline{\alpha_i}) \subseteq \overlinex{Supp_{i'}}$.
        \item Case 2: $(\underline{\alpha_i}, \overline{\alpha_i}) \cap \overlinex{Supp_{i'}} = \emptyset$.
        \item Case 3: $\exists (\underline{\alpha_{i'}}, \overline{\alpha_{i'}}] \subseteq (\underline{\alpha_i}, \overline{\alpha_i}]$, $\underline{\alpha_{i'}} \in \overlinex{Supp_{i'}}$ and $(\underline{\alpha_{i'}}, \overline{\alpha_{i'}}) \cap \overlinex{Supp_{i'}} = \emptyset$.
    \end{itemize}
    
    For Case 3, we consider $(\underline{\alpha_{i'}}, \overline{\alpha_{i'}})$ instead of $(\underline{\alpha_{i}}, \overline{\alpha_{i}})$.
    Finally, we can find an interval $(\underline{\alpha_{i^*}}, \overline{\alpha_{i^*}})$ such that (1) $\underline{\alpha_{i^*}} \in \overlinex{Supp_{i^*}}$, (2) $\underline{\alpha_{i^*}} > 0$, (3) $\overline{\alpha_{i^*}} \leq L$, and (4) $(\underline{\alpha_{i^*}}, \overline{\alpha_{i^*}}) \cap \overlinex{Supp_{i''}} = \emptyset$ or $(\underline{\alpha_{i^*}}, \overline{\alpha_{i^*}}) \subseteq \overlinex{Supp_{i''}}$ for $\forall i''$.

    Let $S = \{i''|(\underline{\alpha_{i^*}}, \overline{\alpha_{i^*}}) \subseteq \overlinex{Supp_{i''}}\}$.
    By Lemma \ref{closed set of supprot}, we have $|S| \geq 2$.
    Next, we assert that player $i^*$ has motivation to bid $\underline{\alpha_{i^*}} + \varepsilon$.

    Let $S' = [n] \backslash S$, $s = |S|$.
    Clearly, for $\forall i \in S'$, $x \in (\underline{\alpha_{i^*}}, \underline{\alpha_{i^*}} + \varepsilon)$, $F_i(x) = F_i(\underline{\alpha_{i^*}})$.
    Let $\lambda = \prod_{i \in S'}F_i(x)$, $\lambda' = \prod_{i \in S' \backslash \{i^*\}}F_i(x)$.
    We only consider the case of the utility of the player $i \in S$ is $u_i(x, F_{-i}) = a_ix$.
    Other cases are similar to it.
    For $\forall i \in S$, we have $u_i(x, F_{-i}) = a_ix = \prod_{i' \in S'}F_{i'}(x) \cdot \prod_{i' \in S \backslash \{i\}}F_{i'}(x) = \lambda \cdot \prod_{i' \in S \backslash \{i\}}F_{i'}(x)$.
    Thereby, $\prod_{i \in S}a_ix = \lambda^s \cdot (\prod_{i \in S}F_i(x))^{s - 1}$, and we can have $\prod_{i \in S}F_i(x) = (\frac{\prod_{i \in S}a_i}{\lambda^s})^{\frac{1}{s - 1}}x^{\frac{s}{s - 1}}$.
    Then if $i^*$ bid $x \in (\underline{\alpha_{i^*}}, \underline{\alpha_{i^*}} + \varepsilon)$, we can calculate the utility, that is $u_{i^*}(x, F_{-i^*}) = (\frac{\prod_{i \in S}a_i}{\lambda^s})^{\frac{1}{s - 1}}x^{\frac{s}{s - 1}} \cdot \lambda'$.
    Let $\gamma = (\frac{\prod_{i \in S}a_i}{\lambda^s})^{\frac{1}{s - 1}}\lambda'$ and $t = \frac{s}{s - 1}$, respectively, which are constants.
    Therefore, $u_{i^*}(x, F_{-i^*}) = \gamma \cdot x^t$.
    
    We take the derivative of $x$ in the interval $(\underline{\alpha_{i^*}}, \underline{\alpha_{i^*}} + \varepsilon)$ and obtain $\gamma t x^{t - 1}$.
    Due to the continuity at point $x = \underline{\alpha_{i^*}}$, we have $a_{i^*} \cdot \underline{\alpha_{i^*}} = \gamma \cdot \underline{\alpha_{i^*}}^t$.
    Then $a_{i^*} = \frac{\gamma \cdot \underline{\alpha_{i^*}}^t}{\underline{\alpha_{i^*}}} = \gamma \cdot \underline{\alpha_{i^*}}^{t - 1}$.
    We can obtain the following inequality in the interval $x \in (\underline{\alpha_{i^*}}, \underline{\alpha_{i^*}} + \varepsilon)$
    \begin{equation}
        \gamma t x^{t - 1} \geq \gamma t \underline{\alpha_{i^*}}^{t - 1} > \gamma \underline{\alpha_{i^*}}^{t - 1} = a_{i^*}.
    \end{equation}
    From the inequality above, we can know that when $x \in (\underline{\alpha_{i^*}}, \underline{\alpha_{i^*}} + \varepsilon)$, the derivative is higher than that when $x = \underline{\alpha_{i^*}}$.
    Therefore, $i^*$ will bid $\underline{\alpha_{i^*}} + \varepsilon$ rather than $\underline{\alpha_{i^*}}$, which shows that it is impossible to have such an interval $(\underline{\alpha_{i^*}}, \overline{\alpha_{i^*}})$.
\end{proof}

\subsection{Proof of Theorem \ref{NE characterize}}
To prove Theorem \ref{NE characterize}, we first present Lemma \ref{u is positive}.
The lemma states that a positive bid is guaranteed to generate a positive utility in any Nash equilibrium.
\begin{lemma}\label{u is positive}
    For any Nash equilibrium $(F_i)_{i \in [n]}$, we have $u_i(x, F_{-i}) > 0$, $\forall i$, $\forall x > 0$.
\end{lemma}
\begin{proof}
    By Lemma \ref{closed set of supprot}, there exists $i' \ne i''$ s.t. $x \in \overlinex{Supp_{i'}} \cap \overlinex{Supp_{i''}}$. 
    By Lemma \ref{lem-cont-linear}, we have $u_{i'}(x', F_{-i'}) = \prod_{i \neq i'}F_{i}(x') = a_{i'}x' + b_{i'} > 0$ for every $x' \in Supp_{i'}$.
    This together with $x > 0$ gives $F_{i}(x) > 0$ for every $i \ne i'$, because every $F_i$ is non-decreasing, and $Supp_{i'}$ is a dense subset of $\overlinex{Supp_{i'}}$.
    Similarly, we have $F_{i}(x) > 0$ for every $i \ne i''$.
    We conclude that $F_i(x) > 0$, and $u_i(x, F_{-i}) > 0$  for all $i$ and $x > 0$.
\end{proof}

Now, we begin to prove Theorem \ref{NE characterize}.
\begin{proof}
    By Lemma \ref{epsilon is supprot} and Lemma \ref{continuous interval}, for any $i$, we have either (a) $\overlinex{Supp_i} = [0, L]$ or (b) $\overlinex{Supp_i} = [h_i, L] \cup \{0\}$ for some $h_i > 0$.
    For case (b), we must have $F_i(0) > 0$; otherwise, for any player $i^* \neq i$ and for any $x \in (0, h_i)$, we would have $u_{i^*}(x, F_{-i^*}) = 0$, which contradicts Lemma \ref{u is positive}.
    
    For Case 1, we assume that $u_i(x, F_{-i}) = a_ix$ holds for every $i$.
    We can easily show that $a_i = \frac{1}{L}$ for every $i$.
    Define index $(z_i)_{i \in [n]}$ satisfying $\overlinex{Supp_{z_1}} \supseteq \overlinex{Supp_{z_2}} \supseteq \overlinex{Supp_{z_3}} \supseteq \cdots$, then $u_{z_i}(x, F_{-z_i}) = u_{z_{i + 1}}(x, F_{-z_{i + 1}})$ holds for all $x \in Supp_{z_{i + 1}} \backslash \{0\}$ and all $i < n$.
    It implies $F_{z_{i + 1}}(x) = F_{z_{i}}(x)$ for all $x \in Supp_{z_{i + 1}} \backslash \{0\}$ and all $i < n$.
    By Lemma \ref{distribution} we can obtain that $\mathbb{E}_{X \sim F_{z_{i + 1}}}[X] \leq \mathbb{E}_{X \sim F_{z_{i}}}[X]$, which further gives $B_{z_{i + 1}} \le B_{z_i}$. 
    If in addition we have $\overlinex{Supp_{z_{i + 1}}} = \overlinex{Supp_{z_i}}$, then $\mathbb{E}_{X \sim F_{z_{i + 1}}}[X] = \mathbb{E}_{X \sim F_{z_{i}}}[X]$, so that $B_{z_{i + 1}} = B_{z_i}$.
    When $B_1$ is the unique maximum budget, there has to be $\overlinex{Supp_1} \supsetneq \overlinex{Supp_i}$ for all $i \neq 1$, hence we have $\bigcup_{i' \neq 1} \overlinex{Supp_{i'}} \neq \bigcup_{i} \overlinex{Supp_i}$ which contradicts with the second item in Lemma \ref{closed set of supprot}.
    So our assumption is wrong.
    That is, there exists $i \in [n]$ s.t. $u_i(x, F_{-i}) = a_ix + b_i$ with $b_i > 0$, and we have $a_iL + b_i = 1$.
    By Lemma \ref{lem-one-pos-b}, we can obtain that for any $i^* \neq i$, $u_{i^*} = \frac{1}{L}x$.
    Note that $u_i(x, F_{-i}) = a_i x + b_i$ and $u_{i^*}(x, F_{-i^*}) = \frac{1}{L} x$, in which $a_i = \frac{1 - b_i}{L}$, so we have $a_i < \frac{1}{L}$ in the interval $[0, L]$.
    Then we have $u_i(x, F_{-i}) > u_{i^*}(x, F_{-i^*})$ for all $x < L$, and we can obtain that $F_{i^*}(x) > F_i(x)$, $\mathbb{E}_{X \sim F_{i}}[X] > \mathbb{E}_{X \sim F_{i^*}}[X]$, $B_i > B_{i^*}$, and $i = 1$.
    Therefore, (a) holds.
    As argued above, for any $i', i'' \neq 1$, $\overlinex{Supp_{i'}} \subsetneq  \overlinex{Supp_{i''}}$ only if $ B_{i'} < B_{i''}$, therefore (c) holds.
    Furthermore, for $\overlinex{Supp_2} = \overlinex{Supp_3} = \cdots = \overlinex{Supp_{i'}}$, due to Lemma \ref{closed set of supprot}, $\overlinex{Supp_2} = [0, L]$, and (b) holds.

    For Case 2, we assume that $\exists i$ such that $u_i(x, F_{-i}) = a_ix + b_i$, $b_i > 0$, which implies that for any $i^* \neq i$, $u_{i^*}(x, F_{-{i^*}}) = \frac{1}{L} x$.
     Since $u_i(x, F_{-i}) > u_{i^*}(x, F_{-{i^*}})$ for all $x < L$, we can obtain that $F_{i^*}(x) > F_i(x)$, $\mathbb{E}_{X \sim F_{i}}[X] > \mathbb{E}_{X \sim F_{i^*}}[X]$, $B_i > B_{i^*}$, and $i = 1$, which contradicts $B_1 = B_2$.
     So our assumption is wrong.
     That is $u_i(x, F_{-i}) = \frac{1}{L} x$ holds for every $i$.
     We similarly define index $(z_i)_{i \in [n]}$ s.t. $B_{z_i} \geq B_{z_{i+1}}$.
     As argued above, we can know that $\overlinex{Supp_{z_i}} = \overlinex{Supp_{z_{i+1}}}$ implies $B_{{z_i}} = B_{{z_{i+1}}}$ and $F_{z_i} = F_{z_{i+1}}$.
     In addition, $\overlinex{Supp_{z_i}} \supsetneq \overlinex{Supp_{z_{i+1}}}$ implies $B_{{z_i}} > B_{{z_{i+1}}}$.
     Hence, we have $\overlinex{Supp_i} = [0, L]$, $\forall i \in \{1, 2, \cdots, i' \}$, and $\overlinex{Supp_i} = \{0\} \cup [h_i, L]$, $F_i(0) > 0$, $i \geq i'+1$.
     Therefore, (b) holds.
     Because of $u_1(x) = \frac{1}{L} x$, then $\prod_{i > 1} F_i(0) = 0$. 
     We can deduce $\prod_{2 \leq i \leq n} F_i(0) = 0$. 
     Thus, $F_i(0) = 0$, $\forall i \in \{1, 2, \cdots, i'\}$.
     Therefore, (a) holds.
\end{proof}

\subsection{Proof of Lemma \ref{uniqueness-Fi=F2}}
\begin{proof}
    Consider two players $1 < i < i'$ so that $B_i \geq B_{i'}$, and $h_i = \inf (\overlinex{Supp_i\backslash\{0\}})$, $h_{i'} = \inf (\overlinex{Supp_{i'}\backslash\{0\}}$), by Theorem \ref{NE characterize}, we have $h_i \leq h_{i'}$.
    For any $x \in (h_{i'}, L]$, we have
    \begin{align*}
        u_i(x) &= F_1(x) \cdot \prod_{i'' \in [n] \backslash \{1, i\}} F_{i''}(x) = \frac{1}{L} x,  \\
        u_{i'}(x) &= F_1(x) \cdot \prod_{i'' \in [n] \backslash \{1, i'\}} F_{i''}(x) = \frac{1}{L} x,
    \end{align*}
    which implies $F_{i'}(x) = F_i(x)$ for all $x \in (h_{i'}, L]$.
    Since this holds for any $1 < i < i' \le n$, we can obtain that for any player $i \in \{2, 3, \cdots, n\}$, we have $F_i(x) = F_{i'}(x)$ for any $i' > i$ and any $x \in [h_{i'}, L]$. 
    In particular, $F_2(x) = F_i(x)$ for every $i \in \{3, 4, \cdots, n\}$ and every $x \in Supp_i \backslash \{0\}$.
    
    If $B_1 = B_2$, by Theorem \ref{NE characterize}, we have $u_1(x) = u_2(x) = \frac{x}{L} = \prod_{i \ne 1} F_i(x) = \prod_{i \ne 2} F_i(x)$ for every $x \in [0, L]$, therefore $F_1(x) = F_2(x)$ for every $x \in [0, L]$.
\end{proof}

\subsection{Proof of Lemma \ref{uniqueness-probability-bidding-0}}
\begin{proof}
    By Lemma \ref{uniqueness-Fi=F2}, for player $i$, $x \in [h_i, L]$, we have $F_i(x) = F_2(x)$. 
    Let $x = h_i$, we have $F_i(h_i) = F_2(h_i)$. 
    If $h_i = 0$, it means $F_i(0) = p_i = 0$, then $F_i(0) = F_2(0) = 0$. 
    If $h_i > 0$, we have $F_i(h_i) = p_i$, then $F_2(h_i) = p_i$.
\end{proof}

\subsection{Equation system for solving Nash equilibrium}\label{Equation system}
Let $Q_i = \prod_{r > i} p_r$.
For Case (1), we have $B_1 > B_2 = \cdots = B_{i'} > B_{i'+1} \geq \cdots \geq B_n$, which implies that $h_1 = h_2 = \cdots = h_{i'} = 0$, $0 < h_{i'+1} \leq h_{i'+2} \leq \cdots \leq h_n$, and $p_i > 0$, $\forall i \geq 2$, $0 < p_2 = \cdots = p_{i'}$.
Therefore, the system of equations is as follows:

\begin{flalign*}
    &\forall x \in [h_n, L], \forall i \in \{2, 3, \cdots, n\}, \,\,
    \begin{cases}
        F_1(x) [F_2(x)]^{n-2} = \frac{x}{L}, \\
        [F_2(x)]^{n-1} = a_1 x + b_1, \\
        \int_{h_n}^L x f_2(x)dx = B_n, \\
        F_2(x) = F_i(x).
    \end{cases}
    &
\end{flalign*}

\begin{flalign*}
    &\forall x \in [h_r, h_{r+1}], \forall r \in \{i'+1, \cdots, n-1\}, \forall i \in \{2, \cdots, r\}, \,\,
    \begin{cases}
        F_1(x) [F_2(x)]^{r-2} Q_{r} = \frac{x}{L}, \\
        [F_2(x)]^{r-1} Q_{r} = a_1 x + b_1, \\
        \int_{h_r}^{h_{r+1}} xf_2(x)dx = B_r - B_{r+1}, \\
        F_2(x) = F_i(x).
    \end{cases}
    &
\end{flalign*}

\begin{flalign*}
    &\forall x \in (0, h_{i'+1}], \forall i \in \{2, 3, \cdots, i'\}, \,\,
    \begin{cases}
        F_1(x) [F_2(x)]^{i'-2} Q_{i'} = \frac{x}{L}, \\
        [F_2(x)]^{i'-1} Q_{i'} = a_1 x + b_1, \\
        Q_1 = b_1, \\
        F_2(x) = F_i(x).
    \end{cases}
    &
\end{flalign*}

\begin{flalign*}
    &\begin{cases}
        a_1 L + b_1 = 1, \\
        \int_0^L x f_1(x)dx = B_1.
    \end{cases}
    &
\end{flalign*}

\subsection{Proof of Corollary \ref{cor:two-player-single-battlefield}}
\begin{proof}
    Because of $B_1 \geq B_2$, let $u_1(x) = a_1 x + b_1 = F_2(x)$, $b_1 \geq 0$, and $u_2(x) = a_2 x = F_1(x)$.
    When $x = L$, we have $a_2 L = F_1(L) = 1$, then we can get $a_2 = \frac{1}{L}$.
    The $L$ can be calculated by the equation $\int_0^L \frac{1}{L}x dx = B_1$ and we can get $L = 2B_1$.
    Moreover, we have $u_1(x) = a_1 x + b_2 = F_2(x)$.
    When $x = L$, we have $a_1 L + b_1 = F_2(L) = 1$, then we can get $b_1 = 1 - a_1 L$.
    The $a_1$ can be calculated by the equation $\int_0^L x d(a_1 x + b_1) = B_2$.
    We substitute $L = 2B_1$ into this equation and obtain $a_1 = \frac{B_2}{2B_1^2}$, thereby we can get $b_1 = 1 - \frac{B_2}{B_1}$.
    Therefore, $F_1(x) = \frac{x}{2B_1}$ and $F_2(x) = \frac{B_2}{2B_1^2}x + 1 - \frac{B_2}{B_1}$, $x \in [0, 2B_1]$.
\end{proof}

\subsection{Proof of Theorem \ref{theorem-uniqueness-Nash-equilibrium}}
\begin{proof}
    By Theorem \ref{NE characterize}, for Player $i \in \{3, 4, \cdots, n\}$ we have $\overlinex{Supp_i} = {0} \cup [h_i, L]$ where $h_i \geq 0$.
    For $i \in \{1, 2\}$ we have $\overlinex{Supp_i} = [0, L]$.
    Moreover, since $p_1 = p_2 = 0$, we have $M_1 = M_2 = 0$.
    
    For player 1, if he bids $x \in [h_i, h_{i+1}]$ with $h_i < h_{i+1}$ for some $i \geq 2$, by Lemma \ref{uniqueness-Fi=F2}, its utility is given by
    \begin{align*}
        u_1(x, F_{-1}) = F_2(x) F_3(x) \cdots F_n(x) = F_2(x)^{i - 1} M_{i + 1} = \frac{1}{L} x.
    \end{align*}
    Therefore, we have $F_2(x) = (\frac{1}{L M_{i + 1}}x)^\frac{1}{i - 1}$ for all $x \in [h_i, h_{i+1}]$.
    For $i = n$ and $x \in [h_n, L]$, we have $u_1(x, F_{-i}) = F_2(x)^{n-1} = \frac{1}{L} x$, therefore $F_2(x) = (\frac{1}{L} x)^\frac{1}{n-1}$.
    This allows us to represent player 2's equilibrium strategy as:
    \begin{equation}\label{NE-uniqueness-B1=B2-F2}
        \begin{aligned}
            F_2(x) = \begin{cases}
                (\frac{1}{L M_{i + 1}}x)^\frac{1}{i - 1}, &x \in [h_i, h_{i+1}], i \in \{2, \cdots, n-1\};  \\ 
                (\frac{1}{L} x)^\frac{1}{n-1}, &x \in [h_n, L].
            \end{cases}
        \end{aligned}
    \end{equation}
    For $3 \leq i < n$, we have
    \begin{align*}
        u_1(h_i, F_{-i}) = F_2(h_i)^{i - 1} M_{i + 1} = p_i^{i - 1} M_{i + 1} = (\frac{M_i}{M_{i+1}})^{i - 1} M_{i + 1} = \frac{1}{L} h_i,
    \end{align*}
    which then gives $h_i = \frac{(M_i)^{i-1}}{(M_{i+1})^{i-2}} L$.
    Then, we can calculate $h_i$ for every $i \in [n]$ using $(M_i)_{i \in [n]}$ and $L$, that is
    \begin{align}\label{eq-hi}
        h_i = \begin{cases}
            0, &i \in \{1,2\};  \\
            \frac{(M_i)^{i-1}}{(M_{i+1})^{i-2}} L, &i \in \{3,4,\cdots,n-1\};  \\
            (M_n)^{n-1} L, &i = n.
        \end{cases}
    \end{align}
    By Lemma \ref{uniqueness-Fi=F2}, for $i \in \{2,3,\cdots,n-1\}$, $B_i - B_{i+1}$ can be given by
    \begin{equation*}
        \begin{aligned}
            & B_i - B_{i+1} \\
            = & \int_{h_i}^{h_{i+1}} xf_2(x)dx  \\
            = &~ xF_2(x) \Big|_{h_i}^{h_{i+1}} - \int_{h_i}^{h_{i+1}} F_2(x)dx  \\
            = &~ h_{i+1} F_2(h_{i+1}) - h_i F_2({h_i}) - \left[(1 - \frac{1}{i}) L M_{i+1}(\frac{x}{L M_{i+1}})^\frac{i}{i-1}\right]\Bigg|_{h_i}^{h_{i+1}}.
        \end{aligned}
    \end{equation*}
    Substitute $h_i$ and $h_{i+1}$ into above equation and rearrange results to obtain, $\forall i \in \{2,\cdots,n-1\}$,
    \begin{align}\label{NE-uniqueness-B1=B2-Bi}
        B_i - B_{i+1} = \frac{L}{i} \Big[ \frac{(M_{i+1})^{i+1}}{(M_{i+2})^i} - \frac{(M_i)^i}{(M_{i+1})^{i-1}} \Big].
    \end{align}
    In particular, for $i = n-1$, we let $M_{i+2} = 1$.
    And for $i = n$, we can obtain $B_n = \int_{h_n}^L xf_2(x)dx = \frac{1}{n} L [1 - (M_n)^n]$.
    Taking summation gives:
    \begin{align*}
        \sum_{i = 2}^{n-1} i (B_i - B_{i+1}) &= \sum_{i = 2}^{n-1} L \Big[ \frac{(M_{i+1})^{i+1}}{(M_{i+2})^i} - \frac{(M_i)^i}{(M_{i+1})^{i-1}} \Big]  \\
        &= L \Big[-\frac{(M_2)^2}{M_3} + \frac{(M_n)^n}{(M_{n+1})^{n-1}}\Big]  \\
        &= L [(M_n)^n]
    \end{align*}
    Let $A = \sum_{i = 2}^{n-1} i (B_i - B_{i+1})$, which is a constant.
    Thus, we have $B_n = \frac{1}{n} \frac{A}{(M_n)^n}[1 - (M_n)^n]$, which further gives 
    \begin{align*}
        M_n = (\frac{A}{n B_n + A})^\frac{1}{n}, \quad L = n B_n + A.
    \end{align*}
    Finally, $(M_i)_{i \in [n]}$ can be obtained by equation \eqref{NE-uniqueness-B1=B2-Bi}, which further gives $(h_i)_{i \in [n]}$ by equation \eqref{eq-hi}, and then  $(F_i)_{i \in [n]}$ by equation \eqref{NE-uniqueness-B1=B2-F2}.
    
    We have obtained unique solution for $L$ and $M_n$, which in turn provide unique solution for $M_i$, $h_i$ and $F_i$ for every $i$. Therefore, the Nash equilibrium is unique.
\end{proof}

\end{document}